\documentclass[12pt]{article}
\usepackage{times}
\usepackage{fullpage}
\usepackage[margin=1in]{geometry}
\usepackage{authblk, amsmath, amssymb, amsthm, setspace, bm, mathtools, multirow, makecell, breqn, bigints, enumitem, comment}
\usepackage[normalem]{ulem}
\usepackage{xcolor}
\usepackage{natbib}
\usepackage{hyperref}\hypersetup{colorlinks=true, linkcolor=blue, citecolor=blue, urlcolor=blue}
\usepackage{graphicx, float, caption, subcaption}
\usepackage[title]{appendix}

\providecommand{\keywords}[1]{\small\textbf{Keywords:} #1}

\DeclareMathOperator{\diag}{diag}
\DeclareMathOperator{\ind}{\mathbb{I}}
\newtheorem{theorem}{Theorem}

\newtheorem{proposition}[theorem]{Proposition}

\theoremstyle{remark}
\newtheorem*{remark*}{Remark}

\title{\bf Posterior Mode-Guided Dimension Reduction for Bayesian Model Averaging in Heavy-Tailed Linear Regression}

\author{Shamriddha De\footnote{Postdoctoral Fellow, Real Estate Research Initiative, Indian Institute of Management Bangalore, Bengaluru, India}
\ and
Joyee Ghosh\footnote{Associate Professor, Department of Statistics and Actuarial Science, The University of Iowa, Iowa City, USA}}

\date{}

\begin{document}

\maketitle

\setstretch{1}
\begin{abstract} \noindent
For large model spaces in linear regression with spike-and-slab priors, the potential entrapment of Markov chain Monte Carlo (MCMC)-based methods poses significant challenges in posterior computation. Existing maximum a posteriori (MAP)-based methods provide more computationally viable alternatives, but fail to perform tail heaviness estimation and uncertainty quantification. To address these problems, we propose a method that blends MAP estimation with MCMC-based stochastic search algorithms within an error framework comprising a combination of the hyperbolic and Student-$t$ distributions. The hyperbolic distribution has the light-tailed normal and heavy-tailed Laplace distributions as limiting cases, but is thinner-tailed than the Student-$t$ family. Including the Student-$t$ distribution in the error density enables better adaptation to heavier tails. Amalgamating the two error densities thus ensures a model with more flexible tail behavior when faced with unknown tail thickness in the data, compared to MAP estimators with fixed levels of tail heaviness that assume the errors have a normal or Laplace distribution. Under this proposed error model, the current work develops a two-step expectation conditional maximization (ECM)-guided MCMC algorithm. First, we conduct an ECM-based posterior maximization to guide variable selection. We then execute a Gibbs sampler on the resulting ECM-guided model space for tail heaviness estimation and uncertainty quantification. Through simulation studies and benchmark real datasets, our proposed method is shown to exhibit several advantages in variable selection and uncertainty quantification over state-of-the-art MAP-based methods. To implement our proposed method, we developed the R package \texttt{FlexBayesReg}, available at \href{https://github.com/shamriddha1998/FlexBayesReg}{https://github.com/shamriddha1998/FlexBayesReg}.
\\~\\
\keywords{Expectation Conditional Maximization (ECM), Hyperbolic distribution, Markov chain Monte Carlo (MCMC), Maximum a posteriori (MAP) estimation, Spike-and-slab priors, Student-$t$ distribution.}
\end{abstract}

\onehalfspacing

\section{Introduction} \label{sec:intro}

The extensive literature on variable selection in linear regression models spans a broad range of modeling techniques, ranging from the classical formulations based on normal errors to more recent approaches that emphasize robustness to departures from normality. From a Bayesian point of view, the problem of variable selection is often addressed
by putting spike-and-slab priors on the regression coefficients. Several authors, including \citet{george1993, george1997, gramacy2010, hans2010, hans2011, clyde2011, ghosh2011, ghosh2015, zanella2019, nie2023bbssl}, have utilized such priors to incorporate model uncertainty under normal error models. Spike-and-slab prior formulations have also been adopted for Bayesian variable selection and model averaging in linear regression models with heavy-tailed errors \citep{gramacy2010, ren2023, chakraborty2025, de2026}.

Under spike-and-slab priors, the size of the model space grows exponentially with the number of predictors. Thus, even a modest increase in the number of predictors results in a large model space, whose posterior exploration with MCMC is computationally daunting. Owing to the multimodal structure of the posterior distribution, standard MCMC algorithms are susceptible to entrapment at local modes. Rather than full posterior exploration, a more viable alternative is to focus on the restricted goal of posterior mode estimation, commonly referred to as MAP estimation. The availability of non-stochastic maximization techniques like the expectation maximization (EM) algorithm and its variants \citep{dempster1977, meng1993, wei1990, ruth2024} has further inspired many authors to explore and implement efficient MAP-based techniques. In particular, \citet{rockova2014} proposed the expectation maximization variable selection (EMVS) algorithm with continuous spike-and-slab priors in a normal error model. For the normal likelihood, \cite{rockova2018ssl} later proposed the spike-and-slab lasso by combining spike-and-slab priors with the Bayesian lasso using Laplace priors \citep{park2008} on the regression coefficients. A more robust version of the spike-and-slab lasso with asymmetric Laplace errors was recently developed by \citet{liu2024} through the spike-and-slab quantile lasso (SSQLASSO).

Despite a significant computational gain over MCMC algorithms, deterministic MAP-based methods in the literature have primarily focused on coefficient estimation and variable selection. They lack uncertainty quantification for tail heaviness estimation, which is a core objective of posterior sampling via MCMC methods. In other words, neither MCMC nor MAP-based methods can typically provide a stand-alone pathway to all the aforementioned goals in high-dimensional regression models with potentially heavy-tailed errors. Motivated by this limitation, we try to simultaneously address both computational efficiency and inferential richness within a flexible modeling framework by blending MAP estimation with MCMC-based stochastic search algorithms. For this purpose, we model the errors using an amalgamation of the hyperbolic and the Student-$t$ distributions, denoted by the hyperbolic or Student-$t$ error model (HTEM), and propose a two-step posterior mode-guided MCMC framework. The proposed methodology and its significance are summarized as follows:
\begin{enumerate}
    \item Inspired by \citet{de2026}, we assume that the errors follow either a hyperbolic distribution or a Student-$t$ distribution. An indicator variable $\alpha$ specifies which error distribution is used. The corresponding shape parameter is denoted by $\eta$, and its prior distribution is allowed to depend on $\alpha$. Varying the shape parameter of the hyperbolic distribution encompasses different degrees of tail heaviness, ranging from the light tails of a normal distribution to the heavy tails of a Laplace distribution. Moreover, since the hyperbolic distribution is lighter-tailed than the Student-$t$ family, incorporating the Student-$t$ distribution in our error model allows us to capture a higher degree of tail heaviness. Besides, the Student-$t$ distribution also behaves similar to a normal distribution for large degrees of freedom. In this way, the proposed error density provides greater flexibility in tail behavior than the existing MAP-based techniques like SSQLASSO and EMVS, whose error distributions have Laplace and normal tails, respectively. In contrast, the tail heaviness in our model is estimated from the data.

    \item The first step of our proposed two-step approach guides model space reduction via posterior maximization. For each value of $\alpha$, corresponding to either the hyperbolic or Student-$t$ error distribution, we fix $\eta$ at a suitably chosen value and perform posterior maximization using the ECM algorithm. We show that each of the CM updates in our ECM algorithm attains the unique global maximum for that parameter block, given other parameters and quantities in the E-step. This in turn ensures that the target objective function in the ECM algorithm, that is the log-posterior, is indeed nondecreasing at every step of the ECM algorithm. The tail thickness is treated as fixed at this stage as ECM-based estimation of tail heaviness is often inaccurate. Thereafter, the suitable value of $\alpha$ and the corresponding set of parameter estimates is chosen by maximizing the log-posterior. This ECM-guided step provides a computationally efficient way to construct a reduced model space from the selected predictors, which is subsequently explored by the Gibbs sampler.

    \item The second step is based on a slightly modified version of the stochastic search variable selection approach of \citet{de2026} under HTEM. More specifically, under HTEM, we implement a Gibbs sampler with a Metropolis-Hastings step on the reduced model space. We assign a Bernoulli prior to the error distribution indicator $\alpha$, and perform posterior inference on tail heaviness by assigning a discrete uniform prior to the shape parameter $\eta$. However, unlike \citet{de2026}, we assign a prior to $\eta$ conditional on $\alpha$, using separate supports for $\eta$ under the hyperbolic and Student-$t$ error distributions, to have more flexibility in modeling tail heaviness.
\end{enumerate}
We use the shorthand GECM-HTEM to denote our proposed ECM-guided Gibbs (G) sampling algorithm for the combined error model with hyperbolic and Student-$t$ distributions (HTEM).

The rest of the article is organized as follows. In Section \ref{sec:gecm}, we introduce our proposed two-step GECM-HTEM algorithm. In Section \ref{sec:simul}, we carry out simulation studies to compare the performance with two other state-of-the-art MAP estimation techniques. We analyze three real life datasets in Section \ref{sec:real}. Concluding remarks and possible directions for future work are stated in Section \ref{sec:conc}. The proof of a proposition in Section \ref{sec:gecm_gecm-htem_ecm} pertaining to the normal scale mixture representation of the hyperbolic distribution is outlined in Appendix \ref{app:scalemix_hyperbolic}. In Appendix \ref{app:ecm-htem_max}, we prove a proposition to justify unique global maximizations of the respective conditional objective functions by the continuous CM updates in the ECM step in Section \ref{sec:gecm_gecm-htem_ecm}. Some additional results for real data analysis are reported in Appendix \ref{app:real}. For clarity on parameterization, the probability density functions of some distributions used in the article are provided in Appendix \ref{app:densities}.

\section{The GECM Approach Towards the Mixture of Hyperbolic and Student-$t$ Error Models} \label{sec:gecm}

In this section, we present our proposed two-step posterior mode-guided MCMC technique. In Section \ref{sec:gecm_model}, we begin by formulating our model and briefly highlight the pitfalls of MCMC to motivate our proposed method. In Section \ref{sec:gecm_gecm-htem_ecm}, we introduce slightly modified normal scale mixture representations of the two error models, which facilitate closed-form updates in the conditional maximization steps of the ECM algorithm. Since the normal scale mixture representation of the Student-$t$ error model is well known, Proposition \ref{prop:scalemix_hyperbolic} provides a proof only for the corresponding representation of the hyperbolic error model. We then describe the steps of the ECM algorithm and show in Proposition \ref{prop:ecm-htem_max} that, with the remaining quantities held fixed, each conditional maximization update attains its unique global maximum in the interior of the parameter space. In Section \ref{sec:gecm_gecm-htem_g}, we describe the Gibbs sampling step for uncertainty quantification of the parameters. We close the methodological development in Section  \ref{sec:gecm_hyperpar} with a detailed discussion of the choice of hyperparameters used in the ECM and Gibbs sampling steps.

\subsection{Model Formulation} \label{sec:gecm_model}

To induce flexibility in tail heaviness, \citet{de2026} proposed a mixture of the hyperbolic and the Student-$t$ distributions (HTEM) for modeling the errors. A slight variant of such an error model is used in the present paper, with the differences described later in this section. We consider a linear regression model
\begin{equation}
    \bm{Y} = X\bm{\beta} + \bm{\epsilon},
\label{eq:regmodel}
\end{equation}
with an $n \times 1$ response vector $\bm{Y}$, an $n \times p$ design matrix $X$ of $p$ covariates, $\bm{\beta}$ as the $p \times 1$ vector of regression coefficients, and $\bm{\epsilon}$ as the $n$-dimensional vector of independent and identically distributed errors. Including an intercept term in \eqref{eq:regmodel} is redundant, since the covariates and the response variables are centered and scaled about their respective means and standard deviations. We assume that all the errors follow either a hyperbolic or a Student-$t$ distribution as follows:
\begin{equation}
    \epsilon_i \mid \alpha, \eta, \rho^2
    \stackrel{\mathrm{iid}}{\sim}
    \mathrm{Hyperbolic}(\eta, \rho^2) \ind{(\alpha = 0)}
    + \mathrm{t}(\eta, \rho^2) \ind{(\alpha = 1)},
    \quad
    i = 1, 2, \dots, n,
\label{eq:htem_errdist}
\end{equation}
where $\mathrm{Hyperbolic}(\eta, \rho^2)$ signifies a Hyperbolic distribution with shape and scale parameters $\eta$ and $\rho^2$, respectively, $\mathrm{t}(\eta, \rho^2)$ refers to a Student-$t$ distribution with location parameter zero, degrees of freedom parameter $\eta$ and scale parameter $\rho^2$, and the parameter $\alpha \in \{0, 1\}$ denotes an indicator variable for choosing the error distribution between hyperbolic and Student-$t$. Equation \eqref{eq:htem_errdist} implies that conditional on $\alpha = 0$ and $\alpha = 1$, all the errors are generated from the hyperbolic and the Student-$t$ densities, respectively, that is,
\begin{equation}
    \epsilon_i \mid \alpha=0, \eta, \rho^2 \stackrel{\mathrm{iid}}{\sim} \mathrm{Hyperbolic}(\eta, \rho^2),
    \quad
    \epsilon_i \mid \alpha=1, \eta, \rho^2 \stackrel{\mathrm{iid}}{\sim} \mathrm{t}(\eta, \rho^2),
    \quad
    i = 1, 2, \dots, n.
\label{eq:htem_errdist-special}
\end{equation}
The parameters $\eta$ and $\rho^2$ serve as the tail heaviness and the scale parameters of the error densities, respectively. The model uncertainty in \eqref{eq:regmodel} is introduced using a $p$-dimensional latent binary vector $\bm{\gamma} = (\gamma_1, \gamma_2, \dots, \gamma_p)^\top$, where $\gamma_j \in \{0, 1\}$, for every $j = 1, 2, \dots, p$, with $\gamma_j = 0$ and $\gamma_j = 1$ respectively signifying the exclusion or inclusion of the $j$th covariate in the model. For every model $\bm{\gamma}$ in the model space $\bm{\Gamma} = \{0, 1\}^p$, we define $\bm{\beta}_{\bm{\gamma}}$ to be the $p_{\bm{\gamma}}$-dimensional vector of the non-zero components in $\bm{\beta}$, where $p_{\bm{\gamma}} = \sum_{j=1}^{p} \gamma_j$, and let $X_{\bm{\gamma}}$ denote the matrix of the corresponding columns of the design matrix $X$.

\citet{gneiting1997} expressed the hyperbolic density as a normal scale mixture with a generalized inverse Gaussian (GIG) mixing density. Similarly, normal scale mixture representations of the Student-$t$ distribution with an inverse gamma mixing density are extensively available in the literature. Both the normal scale mixtures have been utilized by \citet{de2026} to formulate a computationally efficient likelihood for model \eqref{eq:regmodel}. The likelihood and the hierarchical prior specification for the unknown parameters are stated below:
\begin{align}
    \bm{Y} \mid \bm{\beta}, \bm{\sigma}
    & \sim
    \mathrm{N}(X\bm{\beta}, \Sigma),
\nonumber \\
    \sigma^2_i \mid \alpha, \eta, \rho^2
    & \stackrel{\mathrm{iid}}{\sim}
    \mathrm{GIG}(1, \eta/\rho^2, \eta\rho^2) \ind{(\alpha = 0)}
    + \mathrm{InvGamma}(\eta/2, \eta\rho^2/2) \ind{(\alpha = 1)}, \nonumber \\
    & \quad i = 1, 2, \dots, n,
\nonumber \\
    \beta_j \mid \gamma_j, \rho^2, \tau^2
    & \stackrel{\mathrm{ind}}{\sim}
    \mathrm{N}(0, \rho^2\tau^2)\ind{(\gamma_j = 1)} + \delta_{\{0\}}(\beta_j)\ind{(\gamma_j = 0)},
    \quad j = 1, 2, \dots, p,
\nonumber \\
    \tau^2
    & \sim
    \mathrm{InvGamma}(\lambda_{\tau}/2, \lambda_{\tau}/2),
\nonumber \\
    \gamma_j \mid \theta
    & \stackrel{\mathrm{iid}}{\sim}
    \mathrm{Bernoulli}(\theta),
    \quad j = 1, 2, \dots, p,
\nonumber \\
    \theta
    & \sim
    \mathrm{Beta}(c_{\theta}, d_{\theta}),
\nonumber \\
    \rho^2
    & \sim
    \mathrm{InvGamma}(a_{\rho}, b_{\rho}),
\nonumber \\
    \eta \mid \alpha
    & \sim
    \mathrm{DiscUniform}(\mathcal{G}_{\eta}^{0}) \ind{(\alpha = 0)}
    + \mathrm{DiscUniform}(\mathcal{G}_{\eta}^{1}) \ind{(\alpha = 1)},
\nonumber \\
    \alpha \mid \omega
    & \sim
    \mathrm{Bernoulli}(\omega),
\nonumber \\
    \omega
    & \sim
    \mathrm{Beta}(r_{\omega}, s_{\omega}),
\label{eq:g-htem_lik-priors}
\end{align}
where $\bm{\sigma} = (\sigma^2_1, \sigma^2_2, \dots, \sigma^2_n)^\top$, $\Sigma = \diag{(\sigma^2_1, \sigma^2_2, \dots, \sigma^2_n)}$, $\lambda_{\tau} > 0$, $a_{\rho} > 0$, $b_{\rho} > 0$, $c_{\theta} > 0$, $d_{\theta} > 0$, $r_{\omega} > 0$, $s_{\omega}>0$, $\mathcal{G}_{\eta}^{0}$ and $\mathcal{G}_{\eta}^{1}$ are finite sets of positive real numbers, $\delta_{\{0\}}(\cdot)$ represents a point mass at zero, and $\ind{(\cdot)}$ denotes the indicator function. At this juncture, it is worthwhile to mention that our proposed prior specification in \eqref{eq:g-htem_lik-priors} differs from that of \citet{de2026} in the treatment of the tail heaviness parameter $\eta$. \citet{de2026} used a common support for the discrete uniform prior on $\eta$, irrespective of the error indicator $\alpha$. In contrast, we consider separate supports for $\eta$ in \eqref{eq:g-htem_lik-priors}, conditional on $\alpha$, to allow greater flexibility in tail thickness while preserving the moment conditions needed for standardization of the response variables. In particular, we recommend $\mathcal{G}_{\eta}^{0} = \{0.05, 0.1, 0.2, 0.3, \dots, 0.9, 1, 2, 5, 10, 20, 50\}$ and $\mathcal{G}_{\eta}^{1} = \{2.1, 5, 10, 20, 50\}$, given $\alpha = 0$ (hyperbolic) and $\alpha = 1$ (Student-$t$), respectively, as suitable grids encompassing wide ranges of tail heaviness. These choices ensure the existence of the first two moments of the error distribution, thereby allowing the usual standardization of the response variable using the mean and standard deviation. The restriction ensuring the existence of the first two moments is also empirically motivated. When the prior support for $\eta$ includes values for which these moments are not guaranteed to exist, standardization based on the mean and standard deviation is not theoretically justified. In our numerical experiments, using a corresponding robust standardization based on the median and quartile deviation, as in \citet{de2026}, led to less satisfactory performance in large $p$ settings.

The priors and likelihood in \eqref{eq:g-htem_lik-priors} enable fully Bayesian inference through a Gibbs sampler with a Metropolis-Hastings step for posterior computation. While this setup can be viewed as a gold standard for Bayesian model averaging in a heavy-tailed framework, computational challenges emerge in terms of scalability to large model spaces. The multimodal nature of the posterior distribution over the models often causes the Gibbs sampler to get trapped in local modes. Escaping such entrapment demands the execution of extremely long chains of the sampler and compromises the computational efficiency of the algorithm. Resorting to a simplified goal of MAP estimation through EM-like algorithms appears to be a more practical alternative to address these computational difficulties. However, the lack of uncertainty quantification in such deterministic techniques highlights the continued need for the Gibbs sampler. Accordingly, we propose an algorithm that combines the computational efficiency of deterministic MAP estimation with the uncertainty quantification provided by stochastic search MCMC.

\subsection{The ECM-Guided Gibbs Sampling} \label{sec:gecm_gecm-htem}

We develop a two-step method that combines deterministic MAP estimation for computational speed with stochastic search MCMC for uncertainty quantification, particularly in the tails, thereby leveraging the complementary strengths of both the approaches. Our proposed technique begins with an initial level of variable selection based on posterior mode estimation via the ECM algorithm \citep{meng1993}, followed by a Gibbs sampler that performs a final level of variable selection on the reduced model space obtained from the modal estimates.

\subsubsection{The ECM Step} \label{sec:gecm_gecm-htem_ecm}

In the first step, we intend to estimate the mode of the posterior distribution, marginalized over the model uncertainty parameter $\bm{\gamma}$, that is,
\begin{equation}
    p(\bm{\beta}, \rho^2, \eta, \alpha, \theta, \omega, \tau^2, \bm{\sigma} \mid \bm{Y}) = \sum_{\bm{\gamma}} p(\bm{\gamma}, \bm{\beta}, \rho^2, \eta, \alpha, \theta, \omega, \tau^2, \bm{\sigma} \mid \bm{Y}).
\end{equation}
However, for a large value of $p$, direct maximization through complete enumeration of all the $2^p$ models and taking summation over them is computationally infeasible. Therefore, following \citet{rockova2014}, we consider $\bm{\gamma}$ as the ``missing data" and maximize the conditional expectation of the ``complete data" log-posterior $\log{p(\bm{\gamma}, \bm{\beta}, \rho^2, \eta, \alpha, \theta, \omega, \tau^2, \bm{\sigma} \mid \bm{Y})}$. The conditional expectation is taken over $\bm{\gamma}$, given the observed data $\bm{Y}$ and the current parameter estimates. Due to the unavailability of closed-form expressions for simultaneous maximization for all the parameters, the ECM algorithm \citep{meng1993} is applied to maximize over blocks of parameters conditionally on others. For computational ease in the ECM step, we make a few changes to \eqref{eq:g-htem_lik-priors}. 

First, we deploy slightly modified versions of the scale mixture representations of the hyperbolic and Student-$t$ densities in the ECM step. In particular, the modified representation for the hyperbolic density is guided by Proposition \ref{prop:scalemix_hyperbolic}, which is proven in Appendix \ref{app:scalemix_hyperbolic}.
\begin{proposition}
    For fixed $\rho^2 > 0$ and $\eta > 0$, let $A$ and $a$ be random variables such that $A \mid a^2, \rho^2 \sim \mathrm{N}(0, \rho^2a^2)$ and $a^2 \mid \eta \sim \mathrm{GIG}(1, \eta, \eta)$. Then $A \mid \eta, \rho^2$ is distributed as $\mathrm{Hyperbolic}(\eta, \rho^2)$.
\label{prop:scalemix_hyperbolic}
\end{proposition}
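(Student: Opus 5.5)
We will compute the marginal density of $A$ directly by integrating out the mixing variable, and match the result to the hyperbolic density in Appendix \ref{app:densities}. Write $v=a^2$. The GIG$(\lambda,\chi,\psi)$ density with the parameterization used in \eqref{eq:g-htem_lik-priors} is
\[
f(v)=\frac{(\psi/\chi)^{\lambda/2}}{2K_\lambda(\sqrt{\chi\psi})}\,v^{\lambda-1}\exp\!\left\{-\tfrac12\left(\chi/v+\psi v\right)\right\},\qquad v>0,
\]
where $K_\lambda$ is the modified Bessel function of the second kind. With $\lambda=1$ and $\chi=\psi=\eta$ this reduces to $f(v)=\{2K_1(\eta)\}^{-1}\exp\{-\tfrac{\eta}{2}(v+1/v)\}$. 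The marginal density is then
\[
p(A\mid\eta,\rho^2)=\int_0^\infty (2\pi\rho^2 v)^{-1/2}\exp\!\left\{-\frac{A^2}{2\rho^2 v}\right\}\frac{1}{2K_1(\eta)}\exp\!\left\{-\frac{\eta}{2}\left(v+\frac1v\right)\right\}dv .
\]

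The key step is to recognize the integrand, as a function of $v$, as an unnormalized GIG$(1/2,\chi',\psi')$ kernel with $\chi'=\eta+A^2/\rho^2$ and $\psi'=\eta$. Normalizing gives
\[
\int_0^\infty v^{-1/2}\exp\{-\tfrac12(\chi'/v+\psi' v)\}\,dv=2K_{1/2}\!\left(\sqrt{\chi'\psi'}\right)(\chi'/\psi')^{1/4}.
\]
Using the closed form $K_{1/2}(z)=\sqrt{\pi/(2z)}\,e^{-z}$, the factors of $(\chi')^{1/4}$ cancel, and we expect
\[
p(A\mid\eta,\rho^2)=\frac{1}{2\rho K_1(\eta)}\exp\!\left\{-\eta\sqrt{1+A^2/\rho^2}\right\}.
\]
This is the symmetric hyperbolic density with shape $\eta$ and scale $\rho$ in the \citet{gneiting1997}-type parameterization.

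The final step is to compare this with the $\mathrm{Hyperbolic}(\eta,\rho^2)$ density stated in Appendix \ref{app:densities} and confirm that both the kernel and the normalizing constant agree. As a check on the normalizing constant, we can verify directly that $\int_{\mathbb{R}}e^{-\eta\sqrt{1+x^2}}\,dx=2K_1(\eta)$, via the substitution $x=\sinh u$ and the integral representation $K_1(\eta)=\int_0^\infty \cosh u\, e^{-\eta\cosh u}\,du$.

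The main obstacle is bookkeeping rather than anything conceptual. We must match the GIG parameterization convention (the order of $\chi$ and $\psi$) used in the paper against the one implied by \eqref{eq:g-htem_lik-priors}. We must also confirm that the paper's hyperbolic density uses $\rho^2$ inside the square root in exactly this way. If the appendix instead writes the density with $\sqrt{\rho^2+A^2}$ and a rescaled $\eta$, we would reconcile the two by a change of variables. This would also explain why the mixing law here is $\mathrm{GIG}(1,\eta,\eta)$, whereas \eqref{eq:g-htem_lik-priors} uses $\mathrm{GIG}(1,\eta/\rho^2,\eta\rho^2)$: the scale $\rho^2$ has been moved from the mixing variable into the normal variance, since $\sigma^2=\rho^2a^2$.
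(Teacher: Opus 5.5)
Your route is the paper's route: integrate out $a^2$, recognize a $\mathrm{GIG}(1/2,\cdot,\cdot)$ kernel, and apply $K_{1/2}(z)=\sqrt{\pi/(2z)}\,e^{-z}$. Everything through your normalizing identity is correct. The gap is the density you then write down as the ``expected'' result, which is not what that identity yields, so your final matching step would fail. After the $(\chi')^{1/4}$ factors cancel, a factor $(\psi')^{-1/2}=\eta^{-1/2}$ remains. The exponent is $\sqrt{\chi'\psi'}=\sqrt{\eta(\eta+A^2/\rho^2)}=\eta\sqrt{1+A^2/(\eta\rho^2)}$, not $\eta\sqrt{1+A^2/\rho^2}$. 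Carrying the algebra through gives
\[
p(A\mid\eta,\rho^2)=\frac{1}{2K_1(\eta)\sqrt{2\pi\rho^2}}\sqrt{\frac{2\pi}{\eta}}\;e^{-\sqrt{\eta(\eta+A^2/\rho^2)}}=\frac{1}{2\sqrt{\eta\rho^2}\,K_1(\eta)}\exp\Bigl\{-\sqrt{\eta\bigl(\eta+A^2/\rho^2\bigr)}\Bigr\},
\]
which is exactly \eqref{eq:hyperbolicdensity}. No reparameterization is needed, and this is precisely the paper's computation.

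Your closing hedge cannot rescue the formula you wrote. Your expression is the law of $A/\sqrt{\eta}$, or equivalently the marginal you would get from $a^2\sim\mathrm{GIG}(1,\eta^2,1)$ in the paper's convention. For $\eta\neq1$ it is therefore a genuinely different distribution at the same $(\eta,\rho^2)$, not a relabelling, and no change of variables can identify the two. A moment check shows this directly: your formula has second moment $\rho^2K_2(\eta)/\{\eta K_1(\eta)\}$, whereas $\mathbb{E}[A^2]=\rho^2\,\mathbb{E}[a^2]=\rho^2K_2(\eta)/K_1(\eta)$. Your $\sinh$-substitution check is valid, but it only shows that your expression integrates to one; it cannot detect that it is the wrong density. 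A minor point: in the paper's convention \eqref{eq:gigdensity}, the second argument of $\mathrm{GIG}$ multiplies $x$ and the third multiplies $1/x$, the reverse of your $(\chi,\psi)$ labelling. This is harmless here because both equal $\eta$.
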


Second, to ensure differentiability of the objective functions, we replace the mixture of point mass at zero and a continuous prior (point mass spike-and-slab prior) on the regression coefficients by a continuous spike-and-slab prior. That is, we consider a mixture of two continuous distributions as the prior, where one component (spike) is concentrated around zero, while the other component (slab) is more diffuse. Such a prior shrinks the coefficients for the noise variables towards zero, but cannot enforce them to be exactly zero. Therefore, the modal estimate of $\bm{\beta}$ does not explicitly drop any covariate from the model. 

Third, preliminary analyses of the ECM algorithm revealed poor estimation of the tail heaviness for modest sample size or weak signal strength. Furthermore, \citet{fernandez1999} pointed out the potential asymptotic unboundedness of the Student-$t$ likelihood, and cautioned against using likelihood maximization techniques such as EM-type algorithms for this distribution when the degrees of freedom are small. For these reasons, in the ECM step we do not maximize over $\eta$. Instead, we fix $\eta$ conditionally on the error indicator $\alpha$, setting $\eta = \eta_\alpha$ for $\alpha \in \{0,1\}$, and perform separate ECM-based posterior maximizations for the two component error models. We define the optimal value of $\alpha$ as the value in $\{0,1\}$ that yields the larger maximized log-posterior. More specifically, we use $\eta = \eta_0 = 1$ and $\eta = \eta_1 = 4.1$ corresponding to the hyperbolic ($\alpha = 0$) and the Student-$t$ ($\alpha = 1$) error models, respectively. In the case of hyperbolic errors, $\eta = \eta_0 = 1$ yields tails that are somewhat heavier than those of the normal distribution but considerably lighter than those of the Laplace distribution. Similarly, choosing $\eta = \eta_1 = 4.1$ under the Student-$t$ error model allows for tails heavier than those of the normal distribution while ensuring the existence of the first four moments. This strategy tends to provide some safeguard against the potential adversities of variable selection under a misspecified light-tailed model. For the parameters $\bm{\gamma}$, $\rho^2$, $\tau^2$ and $\theta$, we retain the priors in \eqref{eq:g-htem_lik-priors}. In a nutshell, using Proposition \ref{prop:scalemix_hyperbolic}, the likelihood as well as the priors used for the ECM step are stated below:
\begin{align}
    \bm{Y} \mid \bm{\beta}, \rho^2, \bm{\sigma}
    & \sim
    \mathrm{N}(X\bm{\beta}, \rho^2\Sigma),
\nonumber \\
    \sigma^2_i \mid \alpha, \eta = \eta_{\alpha}
    & \stackrel{\mathrm{iid}}{\sim}
    \mathrm{GIG}(1, \eta_0, \eta_0) \ind{(\alpha = 0)}
    + \mathrm{InvGamma}(\eta_1/2, \eta_1/2) \ind{(\alpha = 1)}, \nonumber \\
    & \quad i = 1, 2, \dots, n,
\nonumber \\
    \beta_j \mid \gamma_j, \rho^2, \tau^2
    & \stackrel{\mathrm{ind}}{\sim}
    (1-\gamma_j)\mathrm{N}(0, \kappa_0\rho^2\tau^2) + \gamma_j\mathrm{N}(0, \kappa_1\rho^2\tau^2),
    \quad j = 1, 2, \dots, p,
    \nonumber
\\
    \tau^2
    & \sim
    \mathrm{InvGamma}(\lambda_{\tau}/2, \lambda_{\tau}/2),
    \nonumber
\\
    \rho^2
    & \sim
    \mathrm{InvGamma}(a_{\rho}, b_{\rho}),
    \nonumber
\\
    \gamma_j \mid \theta
    & \stackrel{\mathrm{ind}}{\sim}
    \mathrm{Bernoulli}(\theta),
    \quad j = 1, 2, \dots, p,
    \nonumber
\\
    \theta
    & \sim
    \mathrm{Beta}(c_{\theta}, d_{\theta}),
\label{eq:ecm-htem_lik-priors}
\end{align}
where $\eta_0 = 1$, $\eta_1 = 4.1$, and $\kappa_0 > 0$ and $\kappa_1 > 0$ are respectively chosen to be small and large, corresponding to the spike and the slab components. The hyperparameters $\kappa_0$ and $\kappa_1$ control the strengths of the respective spike and slab components. A smaller value of $\kappa_0$ indicates a more peaked spike with more concentration around zero, while a larger value of $\kappa_1$ signifies a flatter slab.

Let $\mathbb{E}^*_{\bm{\gamma}}[\cdot]$ denote the conditional expectation $\mathbb{E}_{\bm{\gamma}}[\cdot \mid \bm{Y}, \alpha, \eta = \eta_{\alpha}, \hat{\bm{\beta}}, \hat{\rho}^2, \hat{\theta}, \hat{\tau}^2, \hat{\bm{\sigma}}]$ over $\bm{\gamma}$, given $\bm{Y}$, $\alpha$, $\eta = \eta_{\alpha}$, and the current estimates $\hat{\bm{\beta}}$, $\hat{\rho}^2$, $\hat{\theta}$, $\hat{\tau}^2$ and $\hat{\bm{\sigma}}$ of the remaining parameters $\bm{\beta}$, $\rho^2$, $\theta$, $\tau^2$ and $\bm{\sigma}$, respectively. The goal is to separately maximize the conditional expectation $\mathbb{E}^*_{\bm{\gamma}}[\log{p(\bm{\gamma}, \bm{\beta}, \rho^2, \theta, \tau^2, \bm{\sigma})}]$ for each value of $\alpha$ ($\alpha = 0, 1$), which, using the likelihood and the priors in \eqref{eq:ecm-htem_lik-priors}, reduces to iteratively maximizing the function
\begin{equation}
    Q^{\alpha} = Q_{\bm{Y}} + Q_{\bm{\sigma}}^{\alpha} + Q_{\bm{\beta}} + Q_{\tau^2} + Q_{\rho^2} + Q_{\bm{\gamma}} + Q_{\theta},
    \quad \alpha = 0, 1,
\end{equation}
where the summands can be expanded as
\begin{align}
    Q_{\bm{Y}}
    & = - \dfrac{n}{2}\log{\rho^2} - \frac{1}{2}\sum_{i=1}^n \log{\sigma^2_i} - \dfrac{1}{2\rho^2} (\bm{Y} - X\bm{\beta})^\top \Sigma^{-1} (\bm{Y} - X\bm{\beta}),
\nonumber \\
    Q_{\bm{\sigma}}^{\alpha}
    & =
    \left[ - \dfrac{\eta_0}{2} \sum_{i=1}^n \left(\sigma^2_i + \dfrac{1}{\sigma^2_i}\right) \right] \ind{(\alpha = 0)}
    +
    \left[ - \left(\dfrac{\eta_1}{2}+1\right) \sum_{i=1}^n \log{\sigma^2_i} - \dfrac{\eta_1}{2} \sum_{i=1}^n \dfrac{1}{\sigma^2_i} \right] \ind{(\alpha = 1)},
\nonumber \\
    Q_{\bm{\beta}}
    & = - \dfrac{1}{2} \sum_{j=1}^p \mathbb{E}^*_{\bm{\gamma}}[\log{v_j}] - \dfrac{p}{2}\log{\rho^2} - \dfrac{p}{2}\log{\tau^2} - \dfrac{1}{2\rho^2\tau^2} \bm{\beta}^\top \Psi \bm{\beta},
\nonumber \\
    Q_{\tau^2}
    & = - \left(\dfrac{\lambda_{\tau}}{2}+1\right)\log{\tau^2} - \dfrac{\lambda_{\tau}}{2\tau^2},
\nonumber \\
    Q_{\rho^2}
    & = - (a_{\rho}+1)\log{\rho^2} - \dfrac{b_{\rho}}{\rho^2},
\nonumber \\
    Q_{\bm{\gamma}}
    & = (\log{\theta}) \sum_{j=1}^p \mathbb{E}^*_{\bm{\gamma}}[\gamma_j] + (\log{(1-\theta)}) \sum_{j=1}^p \mathbb{E}^*_{\bm{\gamma}}[1-\gamma_j],
\nonumber \\
    Q_{\theta}
    & = (c_{\theta}-1)\log{\theta} + (d_{\theta}-1)\log{(1-\theta)},
\label{eq:ecm-htem_objective}
\end{align}
with $\eta_0 = 1$, $\eta_1 = 4.1$, $v_j = \kappa_0(1-\gamma_j) + \kappa_1\gamma_j$ for $j = 1, 2, \dots, p$, and $\Psi = \diag((\mathbb{E}^*_{\bm{\gamma}}[1/v_j]))$. The complicated nature of the objective functions $Q^0$ and $Q^1$ prohibits closed-form maximization to be performed jointly over all the parameters using the traditional EM algorithm \citep{dempster1977}. As an alternative, we adopt the ECM algorithm of \citet{meng1993}. For each fixed value of $\alpha\in \{0,1\}$, we run a separate ECM algorithm by sequentially maximizing the corresponding objective function $Q^\alpha$ with respect to one parameter (or block of parameters), conditional on the current estimates of the remaining parameters under that value of $\alpha$. Thus, the ECM procedure is carried out separately for the hyperbolic and Student-$t$ error models, and the resulting maximized log-posteriors (marginalized over $\bm{\sigma}$) are subsequently compared to select the optimal value of $\alpha$. An outline of the ECM algorithm, followed by the variable selection step, is provided below.
\begin{enumerate}
\item \textbf{The E-Step:}
In the expectation (E) step, we compute the conditional expectations $\mathbb{E}^*_{\bm{\gamma}}[\gamma_j]$, $\mathbb{E}^*_{\bm{\gamma}}[\log{v_j}]$ and $\mathbb{E}^*_{\bm{\gamma}}[1/v_j]$, for every $j = 1, 2, \dots, p$. We note that the conditional posterior distribution of $\bm{\gamma}$, given the response variables $\bm{Y}$, $\alpha$, $\eta = \eta_{\alpha}$ and the current estimates $(\hat{\bm{\beta}}, \hat{\rho}^2, \hat{\theta}, \hat{\tau}^2, \hat{\bm{\sigma}})$, depends on $(\bm{Y}, \alpha, \eta=\eta_{\alpha}, \bm{\sigma})$ only through $(\hat{\bm{\beta}}, \hat{\rho}^2, \hat{\theta}, \hat{\tau}^2)$, and the conditional expectation $\mathbb{E}^*_{\bm{\gamma}}[\gamma_j]$ simplifies to 
\begin{align}
    \mathbb{E}^*_{\bm{\gamma}}[\gamma_j]
    & = \mathbb{P}(\gamma_j = 1 \mid \bm{Y}, \alpha, \eta = \eta_{\alpha}, \hat{\bm{\beta}}, \hat{\rho}^2, \hat{\theta}, \hat{\tau}^2, \hat{\bm{\sigma}})
\nonumber \\
    & = \dfrac{p(\hat{\beta}_j \mid \gamma_j = 1, \hat{\rho}^2, \hat{\tau}^2) \mathbb{P}(\gamma_j = 1 \mid \hat{\theta})} {\sum_{t=0}^1 p(\hat{\beta}_j \mid \gamma_j = t, \hat{\rho}^2, \hat{\tau}^2) \mathbb{P}(\gamma_j = t \mid \hat{\theta})}
    = g_j,
\label{eq:ecm-htem_estep1}
\end{align}
for every $j = 1, 2, \dots, p$. The prior specification in \eqref{eq:ecm-htem_lik-priors} ensures computational tractability of the quantities involved in \eqref{eq:ecm-htem_estep1}. Furthermore, for every $j = 1, 2, \dots, p$, the remaining conditional expectations can be evaluated using \eqref{eq:ecm-htem_estep1} as weighted averages of functions of the spike and the slab strength hyperparameters $\kappa_0$ and $\kappa_1$ in the following manner:
\begin{align}
    \mathbb{E}^*_{\bm{\gamma}}[\log{v_j}]
    & = \mathbb{E}^*_{\bm{\gamma}}[\log{\{\kappa_0(1-\gamma_j) + \kappa_1\gamma_j\}}]
    = (1-g_j)\log{\kappa_0} + g_j\log{\kappa_1},
    \nonumber
\\
    \mathbb{E}^*_{\bm{\gamma}}[1/v_j]
    & = \mathbb{E}^*_{\bm{\gamma}}\left[\dfrac{1}{\kappa_0(1-\gamma_j) + \kappa_1\gamma_j}\right]
    = \dfrac{1-g_j}{\kappa_0} + \dfrac{g_j}{\kappa_1}.
\label{eq:ecm-htem_step2}
\end{align}

\item \textbf{The CM-Step:}
The conditional maximization (CM) step of the ECM algorithm maximizes the objective function $Q^{\alpha}$ for each parameter (or block of parameters), given the current estimates of the remaining parameters, and yields the following closed-form estimates at every iteration:
\begin{align}
    \hat{\bm{\beta}}
    & = \left(X^\top \hat{\Sigma}^{-1} X + \dfrac{1}{\hat{\tau}^2} \Psi\right)^{-1} X^\top \hat{\Sigma}^{-1} \bm{Y},
    \nonumber
\\
    \hat{\rho}^2
    & = \dfrac{2b_{\rho} + (\bm{Y} - X\hat{\bm{\beta}})^\top \hat{\Sigma}^{-1} (\bm{Y} - X\hat{\bm{\beta}}) + \hat{\bm{\beta}}^\top \Psi \hat{\bm{\beta}} / \hat{\tau}^2}{n + p + 2a_{\rho} + 2},
    \nonumber
\\
    \hat{\tau}^2
    & = \dfrac{\lambda_{\tau} + \hat{\bm{\beta}}^\top \Psi \hat{\bm{\beta}} / \hat{\rho}^2}{p + \lambda_{\tau} + 2},
    \nonumber
\\
    \hat{\theta}
    & = \dfrac{c_{\theta} + \sum_{j=1}^p g_j - 1}{c_{\theta} + d_{\theta} + p - 2},
    \nonumber
\\
    \hat{\sigma}^2_i
    & = \dfrac{1}{2\eta_0} \left[ -1 + \left\{1 + 4\eta_0\left(\eta_0 + \dfrac{(y_i - \bm{x}_i^\top\hat{\bm{\beta}})^2}{\hat{\rho}^2}\right)\right\}^{1/2} \right] \ind{(\alpha = 0)} \nonumber \\
    & \quad +
    \dfrac{1}{\eta_1 + 3} \left[ \eta_1 + \dfrac{(y_i - \bm{x}_i^\top\bm{\beta})^2}{\rho^2} \right] \ind{(\alpha = 1)},
    \quad i = 1, 2, \dots, n,
\label{eq:ecm-htem_cmstep}
\end{align}
where $\eta_0 = 1$, $\eta_1 = 4.1$, $\hat{\Sigma} = \diag{(\hat{\sigma}^2_1, \hat{\sigma}^2_2, \dots, \hat{\sigma}^2_n)}$, and for every $i = 1, 2, \dots, n$, $\bm{x}_i$ is the $i$th row of the design matrix $X$ and $y_i$ is the $i$th component of the response vector $\bm{Y}$. We establish in Proposition \ref{prop:ecm-htem_max}  that the continuous CM updates in \eqref{eq:ecm-htem_cmstep} are well-defined and correspond to unique global maximizers of the respective conditional objective functions. Its proof is provided in Appendix \ref{app:ecm-htem_max}.
\begin{proposition}
    Under the likelihood and prior specification in \eqref{eq:ecm-htem_lik-priors}, let us assume that $\kappa_0 > 0$, $\kappa_1 > 0$, $a_{\rho} > 0$, $b_{\rho} > 0$, $\lambda_{\tau} > 0$, $c_\theta > 1$ and $d_\theta > 1$. For each fixed error model, namely the hyperbolic error model with $\alpha = 0$ and $\eta = \eta_0$, and the Student-$t$ error model with $\alpha = 1$ and $\eta = \eta_1$, conditional on the quantities computed in the E-step and on the current values of the remaining parameters, each continuous CM-step with respect to $\bm{\beta}$, $\rho^2$, $\tau^2$, $\theta$, $\sigma^2_i$ ($i = 1, 2, \dots, n$), has a unique global maximizer in the interior of the corresponding parameter space. Moreover, the unique maximizer in each case is attained by the corresponding closed-form update in \eqref{eq:ecm-htem_cmstep}.
\label{prop:ecm-htem_max}
\end{proposition}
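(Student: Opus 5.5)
The plan is to handle each CM block separately. In each case I isolate the terms of $Q^{\alpha}$ that involve the block being updated, holding the E-step quantities $g_j$ and the other parameters fixed. I then show that the resulting one-block objective is strictly concave, either in the original parameter or after a monotone reparameterization. Next I show it diverges to $-\infty$ at the boundary of the parameter space, so that a maximizer exists and lies in the interior. Finally I identify the unique stationary point with the closed-form update in \eqref{eq:ecm-htem_cmstep}.

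\emph{The $\bm{\beta}$ block.} The relevant part is $-\frac{1}{2\rho^2}\{(\bm{Y}-X\bm{\beta})^\top\Sigma^{-1}(\bm{Y}-X\bm{\beta}) + \bm{\beta}^\top\Psi\bm{\beta}/\tau^2\}$. By \eqref{eq:ecm-htem_step2}, $\Psi$ is diagonal with entries $(1-g_j)/\kappa_0 + g_j/\kappa_1 > 0$, since $g_j\in(0,1)$ and $\kappa_0,\kappa_1>0$. Hence $X^\top\Sigma^{-1}X+\Psi/\tau^2$ is positive definite regardless of the rank of $X$. The objective is therefore a strictly concave quadratic on $\mathbb{R}^p$, and its unique maximizer is the generalized ridge solution.

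\emph{The scale blocks $\rho^2$ and $\tau^2$.} Here each objective has the form $-c\log x - b/x$ with $c>0$ and $b>0$:
\begin{itemize}
\item for $\rho^2$, $c = (n+p)/2 + a_\rho + 1$ and $b = b_\rho + \frac12(\text{RSS}_\Sigma + \bm{\beta}^\top\Psi\bm{\beta}/\tau^2)$, where $b>0$ because $b_\rho>0$;
\item for $\tau^2$, $c = p/2 + \lambda_\tau/2 + 1$ and $b = \lambda_\tau/2 + \bm{\beta}^\top\Psi\bm{\beta}/(2\rho^2)$, where $b>0$ because $\lambda_\tau>0$.
\end{itemize}
Substituting $u=1/x$ gives $c\log u - bu$, which is strictly concave on $(0,\infty)$ and tends to $-\infty$ at both ends. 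Its unique maximizer is $u=c/b$, i.e.\ $x=b/c$, and this reproduces the stated formulas. Because $x\mapsto 1/x$ is a bijection of $(0,\infty)$, uniqueness transfers back to the original parameter.

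\emph{The $\theta$ block.} The objective is $(c_\theta-1+\sum g_j)\log\theta + (d_\theta-1+p-\sum g_j)\log(1-\theta)$. Both coefficients are strictly positive, which uses $c_\theta>1$, $d_\theta>1$ and $0\le\sum g_j\le p$. The function is therefore strictly concave on $(0,1)$ and tends to $-\infty$ at both endpoints. Its unique stationary point is the stated $\hat\theta$, which lies in $(0,1)$.

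\emph{The $\sigma_i^2$ blocks.} The objective separates across $i$. Write $r_i=(y_i-\bm{x}_i^\top\bm{\beta})^2/\rho^2\ge 0$.
\begin{itemize}
\item For $\alpha=1$, the objective is $-\frac{\eta_1+3}{2}\log s - \frac{\eta_1+r_i}{2s}$. This is again of the form $-c\log s - b/s$ with $b\ge\eta_1/2>0$, so the maximizer is $(\eta_1+r_i)/(\eta_1+3)$.
\item For $\alpha=0$, the objective is $h(s) = -\frac12\log s - \frac{r_i}{2s} - \frac{\eta_0}{2}(s+1/s)$. I will not rely on global concavity here, since it can fail for large $s$. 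Instead I multiply $h'(s)=0$ by $2s^2$, which gives the quadratic $\eta_0 s^2 + s - (r_i+\eta_0) = 0$. This quadratic has exactly one positive root, because the product of its roots is negative, and that root is the stated formula. Moreover $h'(s)>0$ below the root and $h'(s)<0$ above it, since the sign of $h'$ is the sign of $-(\eta_0s^2+s-r_i-\eta_0)$. Together with $h\to-\infty$ as $s\to0^+$ (because $\eta_0>0$) and as $s\to\infty$, this establishes a unique global maximum.
\end{itemize}

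The main obstacle is the hyperbolic $\sigma_i^2$ update, where concavity fails globally. This is why I use the sign-of-derivative (unimodality) argument there rather than a Hessian argument. A secondary obstacle is making sure every block stays well defined when $r_i=0$, when $\bm{\beta}=0$, or when $X$ is rank-deficient. These cases are covered by the strict positivity supplied by $b_\rho$, $\lambda_\tau$, $\eta_\alpha$, $\kappa_0$, $\kappa_1$ and $c_\theta,d_\theta>1$.
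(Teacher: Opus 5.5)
Your proposal is correct and follows essentially the same block-by-block route as the paper's proof. It uses strict concavity for $\bm{\beta}$ (via positive definiteness of $\Psi$) and for $\theta$, and a sign-of-derivative argument through the quadratic $\eta_0 s^2 + s - (\eta_0 + r_i) = 0$ for the hyperbolic $\sigma_i^2$ update, where, as you rightly note, global concavity fails. The only difference is cosmetic: for $\rho^2$, $\tau^2$ and the Student-$t$ $\sigma_i^2$ you argue via concavity in $1/x$, whereas the paper checks the sign of $(C-Az)/z^2$ directly. Your positive root $\{-1+\sqrt{1+4\eta_0(\eta_0+r_i)}\}/(2\eta_0)$ is the right one and matches the main-text update; the discriminant displayed in the paper's appendix contains a typo.
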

\begin{remark*}
For each fixed $\alpha$ and $\eta = \eta_{\alpha}$, Proposition \ref{prop:ecm-htem_max} verifies the blockwise global maximization condition for the continuous CM updates. Hence, by the standard ECM monotonicity argument, the log-posterior marginalized over $\gamma$, that is, 
$\log{p(\beta, \rho^2, \theta, \tau^2, \bm{\sigma} \mid \bm{Y}, \alpha, \eta = \eta_{\alpha})}$, is nondecreasing along the ECM iterations.
\end{remark*}

\item \textbf{Variable Selection:}
Under continuous spike-and-slab priors, the modal estimates of the regression coefficients cannot be exactly zero and therefore do not allow explicit exclusion of a covariate. Moreover, the median probability model \citep{barbieri2004} and other Bayes factor based approaches \citep{ghosh2015, de2026}, which rely on thresholding the posterior probability $\mathbb{P}(\gamma_j = 1 \mid \bm{Y})$ to select covariates, are not applicable here because the posterior distribution over models is unavailable. \citet{rockova2014} estimated the most probable submodel as the highest probability model conditional on the modal estimates of the parameters. Following this strategy, we choose our best submodel as the model with the highest conditional probability, given $\alpha$, $\eta = \eta_{\alpha}$ and the modal estimates $(\hat{\bm{\beta}}, \hat{\rho}^2, \hat{\theta}, \hat{\tau}^2, \hat{\bm{\sigma}})$ from the ECM algorithm. That is, the estimated model is
\begin{equation}
    \hat{\bm{\gamma}}
    = \arg \max_{\bm{\gamma}} \mathbb{P}(\bm{\gamma} \mid \alpha, \eta=\eta_{\alpha}, \hat{\bm{\beta}}, \hat{\rho}^2, \hat{\theta}, \hat{\tau}^2, \hat{\bm{\sigma}}),
\end{equation}
which can be obtained by thresholding the conditional probability of each component as
\begin{equation}
    \hat{\gamma}_j = 1
    \quad \iff \quad
    \mathbb{P}(\gamma_j = 1 \mid \alpha, \eta=\eta_{\alpha}, \hat{\bm{\beta}}, \hat{\rho}^2, \hat{\theta}, \hat{\tau}^2, \hat{\bm{\sigma}}) \geq 0.5,
\end{equation}
for every $j = 1, 2, \dots, p$. In other words, the $j$th covariate is dropped from our model if the conditional probability of the corresponding ``slab" component, given $\alpha$, $\eta = \eta_{\alpha}$ and the modal estimates $(\hat{\bm{\beta}}, \hat{\rho}^2, \hat{\theta}, \hat{\tau}^2, \hat{\bm{\sigma}})$, falls below a threshold of 0.5.
\end{enumerate}
We execute the ECM algorithm along with the variable selection separately for the hyperbolic (with $\alpha = 0$ and $\eta = \eta_0 = 1$) and the Student-$t$ ($\alpha = 1$ and $\eta = \eta_0 = 1$) error models thereby yielding two sets of estimates of the parameters. Between the two, the set that maximizes the log-posterior marginalized over $\bm{\sigma}$ and evaluated at the remaining estimates is chosen to be the optimal set of estimates.

The estimated inclusion indicator corresponding to the optimal set of estimates defines an ECM-guided model space, consisting of $2^{p^*}$ models, where $p^* \leq p$, based on the design matrix $X^*$ with $p^*$ columns from the original design matrix $X$ corresponding to the included covariates. When the true data generating model contains substantially fewer signals than $p$, this ECM-guided model space is intended to improve the mixing and stability of the Gibbs sampler compared to sampling in the original model space. Accordingly, this model space is utilized in the Gibbs sampling step of our proposed algorithm. 

\subsubsection{The Gibbs Sampling Step} \label{sec:gecm_gecm-htem_g}

In the second step, we study the tail behavior and perform uncertainty quantification through an MCMC algorithm on the ECM-guided model space deduced from the ECM-based modal estimates. The likelihood and priors in \eqref{eq:g-htem_lik-priors} corresponding to the stochastic search HTEM enable closed-form full conditional distributions, which can be further utilized to deduce an efficient Gibbs sampler, along with an MH-step to address model selection. Although a preliminary level of model space reduction is performed in the ECM step, we incorporate this additional step of model uncertainty on the ECM-guided model space to allow dropping of any potential false positives/signals. Moreover, unlike in the ECM step, we put a Bernoulli prior on the error distribution indicator $\alpha$, and discrete uniform priors on the tail heaviness parameter $\eta$, with separate supports conditional on $\alpha$, as indicated in \eqref{eq:g-htem_lik-priors}. This prior specification induces flexibility in tail behavior through seamless adaptivity between the component error distributions (determined by $\alpha$) as well as different degrees of tail thickness (regulated by $\eta$) within each component. The resulting conditional distributions for posterior computation are stated as follows:
\begin{align}
    p(\bm{\gamma} & \mid \bm{Y}, \bm{\sigma}, \tau^2, \rho^2, \theta) \nonumber \\
    & \propto
    \left[
        \dfrac{|D_{\bm{\gamma}}|^{-1/2}}{(\tau^2 \rho^2)^{p_{\bm{\gamma}}/2}}
        \exp{\left( \dfrac{1}{2} \bm{Y}^\top \Sigma^{-1} X_{\bm{\gamma}} D_{\bm{\gamma}}^{-1} X_{\bm{\gamma}}^\top \Sigma^{-1} \bm{Y} \right)}
        \ind(\bm{\gamma} \neq \bm{0})
        + \ind(\bm{\gamma} = \bm{0})
    \right] \nonumber \\
    & \quad \times \theta^{p_{\bm{\gamma}}} (1 - \theta)^{p - p_{\bm{\gamma}}}
\nonumber \\
    \bm{\beta} & \mid \bm{Y}, \bm{\gamma}, \bm{\sigma}, \tau^2, \rho^2
    \sim
    \mathrm{N} \left(
        D_{\bm{\gamma}}^{-1} X_{\bm{\gamma}}^\top \Sigma^{-1} \bm{Y},
        D_{\bm{\gamma}}^{-1},
    \right)
\nonumber \\
    \rho^2 & \mid \bm{\gamma}, \bm{\beta}, \bm{\sigma}, \tau^2, \alpha, \eta \nonumber \\
    & \sim
    \mathrm{GIG} \left(
        -\left(a_{\rho} + n + \dfrac{p_{\bm{\gamma}}}{2}\right),
        \eta \sum_{i=1}^n \dfrac{1}{\sigma^2_i},
        2b_{\rho} + \dfrac{\bm{\beta}_{\bm{\gamma}}^\top \bm{\beta}_{\bm{\gamma}}}{\tau^2} + \eta \sum_{i=1}^n \sigma^2_i
    \right)
    \ind{(\alpha = 0)} \nonumber \\
    & \quad
    + \mathrm{GIG} \left(
        \dfrac{n\eta - p_{\bm{\gamma}} - 2a_{\rho}}{2},
        \eta \sum_{i=1}^n \dfrac{1}{\sigma^2_i},
        2b_{\rho} + \dfrac{\bm{\beta}_{\bm{\gamma}}^\top \bm{\beta}_{\bm{\gamma}}}{\tau^2}
    \right)
    \ind{(\alpha = 1)},
\nonumber \\
    p(\alpha & \mid \bm{Y}, \bm{\gamma}, \bm{\beta}, \rho^2, \omega) \nonumber \\
    & \propto
    \left[
        (1-\omega) \sum_{\eta \in \mathcal{G}_{\eta}^{0}}
        \prod_{i=1}^n \dfrac{1}{2 \sqrt{\eta \rho^2} K_1(\eta)} \exp{\left(-\sqrt{\eta \left(\eta + \dfrac{\epsilon_i^2}{\rho^2}\right)}\right)}
    \right] \ind{(\alpha = 0)} \nonumber \\
    & \quad
    + \left[
        \omega \sum_{\eta \in \mathcal{G}_{\eta}^{1}}
        \prod_{i=1}^n \dfrac{\Gamma((\eta+1)/2)}{\Gamma(\eta/2)} \dfrac{\eta^{\eta/2}}{\sqrt{\pi\rho^2}} \left(\eta + \dfrac{\epsilon_i^2}{\rho^2}\right)^{-(\eta+1)/2}
    \right] \ind{(\alpha = 1)},
\nonumber \\
    p(\eta & \mid \bm{Y}, \bm{\gamma}, \bm{\beta}, \rho^2, \alpha) \nonumber \\
    & \propto
    \left[
        \left\{
            \prod_{i=1}^n \dfrac{1}{2 \sqrt{\eta \rho^2} K_1(\eta)} \exp{\left(-\sqrt{\eta \left(\eta + \dfrac{\epsilon_i^2}{\rho^2}\right)}\right)}
        \right\}
        \ind{(\eta \in \mathcal{G}_{\eta}^{0})}
    \right] \ind{(\alpha = 0)} \nonumber \\
    & \quad
    + \left[
        \left\{
            \prod_{i=1}^n \dfrac{\Gamma((\eta+1)/2)}{\Gamma(\eta/2)} \dfrac{\eta^{\eta/2}}{\sqrt{\pi\rho^2}} \left(\eta + \dfrac{\epsilon_i^2}{\rho^2}\right)^{-(\eta+1)/2}
        \right\}
        \ind{(\eta \in \mathcal{G}_{\eta}^{1})}
    \right] \ind{(\alpha = 1)},
\nonumber \\
    \sigma^2_i & \mid \bm{Y}, \bm{\gamma}, \bm{\beta}, \alpha, \eta, \rho^2 \nonumber \\
    & \stackrel{\mathrm{ind}}{\sim}
    \mathrm{GIG} \left(
        \dfrac{1}{2}, \dfrac{\eta}{\rho^2}, \epsilon_i^2 + \eta\rho^2
    \right) \ind{(\alpha = 0)}
    + \mathrm{IGamma} \left(
        \dfrac{\eta+1}{2}, \dfrac{\epsilon_i^2 + \eta\rho^2}{2}
    \right) \ind{(\alpha = 1)}, \nonumber \\
    & \qquad i = 1, 2, \dots, n,
\nonumber \\
    \tau^2 & \mid \bm{\gamma}, \bm{\beta}, \rho^2
    \sim
    \mathrm{IGamma} \left(
        \dfrac{\lambda_{\tau} + p_{\bm{\gamma}}}{2},
        \dfrac{\bm{\beta}_{\bm{\gamma}}^\top \bm{\beta}_{\bm{\gamma}}}{2\rho^2} + \dfrac{\lambda_{\tau}}{2}
    \right),
\nonumber \\
    \theta & \mid \bm{\gamma}
    \sim
    \mathrm{Beta} \left(
        c_{\theta} + p_{\bm{\gamma}},
        d_{\theta} + p - p_{\bm{\gamma}}
    \right),
\nonumber \\
    \omega & \mid \alpha
    \sim
    \mathrm{Beta} \left(
        r_{\omega} + \alpha,
        s_{\omega} + 1 - \alpha
    \right),
\label{eq:g-htem_post}
\end{align}
where $D_{\bm{\gamma}} = X_{\bm{\gamma}}^\top \Sigma^{-1} X_{\bm{\gamma}} + \dfrac{1}{\tau^2 \rho^2} I_{p_{\bm{\gamma}}}$, $p_{\bm{\gamma}} = \sum_{j=1}^p \bm{\gamma}_j$, and ${\epsilon}_i = y_i - \bm{x}^{(\bm{\gamma}) \top}_i \bm{\beta}_{\bm{\gamma}}$, for $i = 1, 2, \dots, n$, $\bm{\beta}_{\bm{\gamma}}$ is the $p_{\bm{\gamma}}$-dimensional vector of the non-zero components in $\bm{\beta}$  conditional on the inclusion indicator vector $\bm{\gamma}$, and $X_{\bm{\gamma}}$ signifies the matrix of corresponding columns of the design matrix $X$. Given a current model $\bm{\gamma}$ and a proposed model $\bm{\gamma}^*$, the Metropolis-Hastings acceptance probability is computed as
\begin{equation*}
    \pi_a = \min{\left\{
        1,
        \dfrac{p(\bm{\gamma}^* \mid \bm{Y}, \bm{\sigma}, \tau^2, \rho^2, \theta)}{p(\bm{\gamma} \mid \bm{Y}, \bm{\sigma}, \tau^2, \rho^2, \theta)}
    \right\}}.
\end{equation*}

\subsection{Choice of Hyperparameters} \label{sec:gecm_hyperpar}

The proposed GECM-HTEM algorithm involves several hyperparameters, whose choices are crucial to the performance of the algorithm. 

For the ECM step, a vital question arises regarding the choice of $\kappa_0$ and $\kappa_1$ in the prior on the regression coefficients $\bm{\beta}$ in \eqref{eq:ecm-htem_lik-priors}. Conditional on $\tau^2$, $\kappa_0$ and $\kappa_1$ control the variances of the spike and the slab components, respectively. Similarly, marginalization over $\tau^2$ leads to a mixture of multivariate Student-$t$ priors on $\bm{\beta}$ with $\lambda_{\tau}$ degrees of freedom, and scale parameters $\kappa_0\rho^2$ and $\kappa_1\rho^2$ corresponding to the spike and the slab components, respectively. In this case, $\kappa_0$ and $\kappa_1$ regulate the scales of the respective Student-$t$ distributions in the spike and the slab components. Besides, the prior specification in \eqref{eq:g-htem_lik-priors} yields a multivariate Student-$t$ prior with $\lambda_{\tau}$ degrees of freedom and scale $\rho^2$ on the non-zero regression coefficients $\bm{\beta}_{\bm{\gamma}}$ in the Gibbs sampling step, when marginalized over $\tau^2$. Since the covariates and response variables are standardized, we choose $\kappa_1 = 1$, which also preserves consistency between the priors on the slab components in the ECM and the Gibbs sampling steps. Furthermore, we take $\lambda_{\tau} = 1$ for reasonably thick-tailed priors on the slab components.
The spike hyperparameter $\kappa_0$ affects the algorithm even more critically than the slab. Rather than suggesting a single value, it is more appealing to tune $\kappa_0$ from a grid of values, based on the data. Following the grid recommended by \citet{rockova2014}, we allow $\kappa_0$ to be selected from $\{0.01, 0.02, \dots, 0.51\}$ through 10-fold cross-validation. The optimal $\kappa_0$ is chosen to be the value which minimizes the median of the fold-wise medians of absolute differences between the actual and the predicted values of the response variables. In our implementation, the cross-validation step is carried out in parallel to enhance the computational efficiency of the algorithm. Since the response variables are standardized to have a unit variance, an $\mathrm{InvGamma}(a_{\rho}=2.1, b_{\rho}=0.1)$ prior is placed on the error scale parameter $\rho^2$, such that the prior mass is mostly concentrated below one. For $\theta$, the probability of the slab component, we consider a $\mathrm{Beta}(c_{\theta}=1, d_{\theta}=1)$ prior in the Gibbs sampling step, which has also been widely used in the literature. In contrast, the ECM step uses a $\mathrm{Beta}(c_{\theta}=1.1, d_{\theta}=1.1)$ prior on $\theta$ to avoid boundary estimates.

The roles of the error distribution indicator $\alpha$ and the tail thickness parameter $\eta$ require some discussion as well. As indicated in Section \ref{sec:gecm_gecm-htem}, the ECM algorithm is run separately for each fixed choice of $\alpha\in \{0,1\}$, with $\eta$ fixed at $\eta = \eta_{\alpha}$ during the ECM step, and the preferred value of $\alpha$ is selected by comparing the corresponding maximized log-posteriors. By contrast, in the Gibbs sampler, both $\alpha$ and $\eta$ are assigned prior distributions and sampled as part of the posterior computation. We note that although no explicit prior is assigned to $\alpha$ in the ECM step, selecting the value of $\alpha$ with the larger maximized log-posterior is equivalent to using equal prior probabilities for the two error models, since such a prior contributes the same additive constant to both log-posterior criteria. This Bernoulli prior with probability 0.5 matches the marginal prior on $\alpha$ induced by the symmetric Beta-Bernoulli prior used in the Gibbs sampler. Ideally, putting a prior on $\eta$ throughout the entire algorithm would have been more appropriate for capturing tail uncertainty. However, as described towards the end of Section \ref{sec:gecm_model}, owing to the inability of suitable tail heaviness estimation by the ECM, and potential theoretical problems in estimation of the degrees of freedom in the Student-$t$ error model via EM-like algorithms \citep{fernandez1999}, the value of $\eta$ is fixed at $\eta_{\alpha}$. It might be noted that the normal and the Laplace distributions can be obtained as limiting cases of the hyperbolic distribution for large and small values of $\eta$, respectively. Therefore, for $\alpha = 0$ corresponding to hyperbolic errors, a choice of $\eta_{\alpha} = 1$ produces tails heavier than those of a normal distribution but much lighter than those of a Laplace distribution. The Student-$t$ family of distributions, on the other hand, has polynomial tails for all finite degrees of freedom $\eta$, in contrast to the exponential tails of the Laplace distribution, with the degree of tail heaviness decreasing as $\eta$ increases and the distribution converging to the normal distribution as $\eta \to \infty$. To strike a balance between the two extreme kinds of tail thickness, we choose $\eta_{\alpha} = 4.1$ while executing the ECM step for Student-$t$ errors, that is, for $\alpha = 1$. When the degree of tail heaviness is unknown and cannot be accurately estimated via the ECM step, these choices represent a practical compromise and may reduce sensitivity to model misspecification. Finally, as stated earlier in this section, in the Gibbs sampling step, we place a $\mathrm{Bernoulli}(\omega)$ prior on $\alpha$, and assign a $\mathrm{Beta}(r_{\omega}=1, s_{\omega}=1)$ prior on $\omega$ to give equal prior weights to the hyperbolic and Student-$t$ models. Given $\alpha = 0$ and $\alpha = 1$, we put discrete uniform priors on $\eta$ with prior supports $\mathcal{G}_{\eta}^{0} = \{0.05, 0.1, 0.2, 0.3, \dots, 0.9, 1, 2, 5, 10, 20, 50\}$ and $\mathcal{G}_{\eta}^{1} = \{2.1, 5, 10, 20, 50\}$, respectively, to study the unknown tail behavior more extensively.

\section{Simulation Study} \label{sec:simul}

The efficacy of the proposed GECM-HTEM technique is demonstrated through simulation studies in comparison with two state-of-the-art MAP estimation techniques, namely, EMVS \citep{rockova2014} and SSQLASSO \citep{liu2024}. Under different kinds of spike-and-slab priors, both the contenders perform deterministic model selection through posterior maximization using the EM algorithm. Specifically, the EMVS algorithm assumes a normal likelihood, while SSQLASSO incorporates a heavier-tailed asymmetric Laplace likelihood. The competing models thus enable a two-fold comparison. First, the studies help us scrutinize the potential advantages and disadvantages of our proposed hybrid approach over MAP estimators. Second, we can compare the flexibility of different likelihoods in capturing tail heaviness.

For the present analysis, SSQLASSO and EMVS are implemented using the \texttt{emBayes} \citep{liu:embayes} and \texttt{EMVS} \citep{rockova:emvs} packages in R, respectively. As recommended by \citet{rockova2014}, we used the grid $\{0.01, 0.02, \dots, 0.51\}$ for the spike hyperparameter for EMVS, which has also been used by our proposed technique. Because applying the same grid to SSQLASSO resulted in computational errors, and no default grid for the corresponding spike hyperparameter was provided in either the paper or the accompanying R package, we used the grid $\{0.001,0.002, \dots,0.051\}$, guided by Figure D3 in \citet{liu2024}. Additionally, the linear quantile regression model for SSQLASSO is run for a quantile level of 0.5 to draw analogy with our model.

To conduct the studies, we generate $n = 400$ observations from the regression model
\begin{equation}
    y_i = \beta_0 + \beta_1 x_{i1} + \beta_2 x_{i2} + \dots + \beta_p x_{ip} + \epsilon_i,
    \quad
    i = 1, 2, \dots, n,
\end{equation}
under the following scenarios:
\begin{enumerate}[label = (\Roman*)]
    \item We take $p = 1000$. The covariates $\bm{x}_i = (x_{i1}, x_{i2}, \dots, x_{ip})^\top$, $i = 1, 2, \dots, n$, are independently drawn from a multivariate normal distribution $\mathrm{N}_p(\bm{0}, \Sigma_X)$ with a first order autoregressive correlation structure as $\Sigma_X = (0.6^{|k-l|})_{1 \leq k, l \leq p}$. The true regression coefficients are $\beta_0 = 2$, $\beta_1 = \dots = \beta_{100} = 1.5$, $\beta_{101} = \dots = \beta_p = 0$. The errors are assumed to follow a $\mathrm{Hyperbolic}(\eta=0.5, \rho^2=2)$ distribution.
    \item This scenario is similar to Scenario (I), except that the errors are normally distributed with zero mean and variance 2.
    \item Here, we take $p = 1500$, and the covariates $x_{ij}$s ($i = 1, 2, \dots, n$; $j = 1, 2, \dots, p$) are independently drawn from a standard normal distribution. The true regression coefficients are $\beta_0 = 2$, $\beta_1 = \dots = \beta_{50} = 1.5$, $\beta_{51} = \dots = \beta_p = 0$, and the errors have a Student-$t$ distribution with 2.05 degrees of freedom.
\end{enumerate}
Each scenario is replicated 100 times to capture variability in the performance metrics. The studied methods are assessed in terms of estimation and out-of-sample predictive performances. For studying estimation performance, we consider the root-mean-squared errors (RMSEs) of the regression coefficients, defined as $\left[(p+1)^{-1} \sum_{j=0}^p (\beta_j - \hat{\beta}_j)^2\right]^{1/2}$, where $\hat{\beta}_j$ denotes the estimate of the $j$th regression coefficient $\beta_j$ ($j = 0, 1, 2, \dots, p$). We also compare the RMSEs of the expected values of the response variables, $\mathbb{E}(y_i | \bm{\beta}) = \sum_{j=0}^p \beta_j x_{ij}$, obtained as \(\left[n^{-1} \sum_{i=1}^n \{\sum_{j=0}^p x_{ij} (\beta_j - \hat{\beta}_j)\}^2\right]^{1/2}\), where $x_{i0} = 1$, for every $i = 1, 2, \dots, n$. Moreover, the true positive rates (TPRs) and the true negative rates (TNRs) of the methods are examined to inspect the variable selection performance. The TPR denotes the proportion of signals correctly identified by a method among all the true signals, while the TNR is the proportion of correctly dropped noise variables among all the true noise variables. Therefore, both TPR and TNR should be equal to 1 in an ideal scenario of perfect detection of signals and noise variables. Based on 1000 out-of-sample observations for each scenario, the predictive performance is evaluated using median of absolute differences (MeADs) between the actual and the predicted values of the response variables, as well as empirical coverage probabilities and median widths of 90\% prediction intervals. Lastly, we also compare the computation times of the three competing methods under consideration.

Before proceeding with the comparison, we state a few details of the implementation of the proposed GECM-HTEM algorithm. For the said method, model selection is first performed using the ECM step, followed by Gibbs sampling under the reduced model for 11000 iterations, discarding the first 1000 iterations as burnin. Moreover, for every MCMC iteration, the posterior samples of $(\beta_1, \beta_2, \dots, \beta_p)^\top$ lead to a draw of the intercept parameter $\beta_0$ as
\begin{equation}
    \beta_0 = \bar{Y} - \sum_{j=1}^p \beta_j \bar{x}_j,
\end{equation}
where $\bar{Y}$ is the mean of the non-centered and unscaled response variable, and $\bar{x}_j$ is the mean of the non-centered and unscaled $j$th covariate ($j = 1, 2, \dots, p$). Similarly, posterior samples for the expected values of the response variables, $\mathbb{E}(y_i | \bm{\beta})$s ($i = 1, 2, \dots, n$), are obtained at every MCMC iteration using the posterior samples of $(\beta_0, \beta_1, \beta_2, \dots, \beta_p)^\top$ as
\begin{equation}
    \mathbb{E}(y_i | \bm{\beta}) = \beta_0 + \sum_{j=0}^p \beta_j x_{ij}.
\end{equation}

\subsection{Estimation Performance}

At first, we evaluate the methods in terms of estimation of the regression coefficients and the expected values of the response variables. Under GECM-HTEM, the point estimates $\hat{\beta}_j$s of the regression coefficients $\beta_j$s ($j = 0, 1, \dots, p$) are taken as the component-wise medians of the posterior samples, owing to the robustness of the median over the mean. Likewise, the expected responses $\mathbb{E}(y_i | \bm{\beta})$s ($i = 1, 2, \dots, n$) are estimated using their corresponding posterior medians. On the other hand, the MAP techniques estimate the same quantities via posterior maximization. From the boxplots of the RMSEs of the regression coefficients in Figure \ref{fig:simul_gecm-map_est}, GECM-HTEM typically shows the lowest RMSEs among the three methods for the normal and the Student-$t$ error models, while exhibiting competitive performance with SSQLASSO for true hyperbolic errors. EMVS consistently produces the largest RMSEs across all the scenarios, with Scenario (I) (true hyperbolic errors) depicting the most stark difference with the two competitors. A similar observation can be made for the expected responses $\mathbb{E}(y_i | \bm{\beta})$s ($i = 1, 2, \dots, n$) from the boxplots of their RMSEs in Figure \ref{fig:simul_gecm-map_est}.
\begin{figure}[!ht]
\centering
    \includegraphics[scale=0.5]{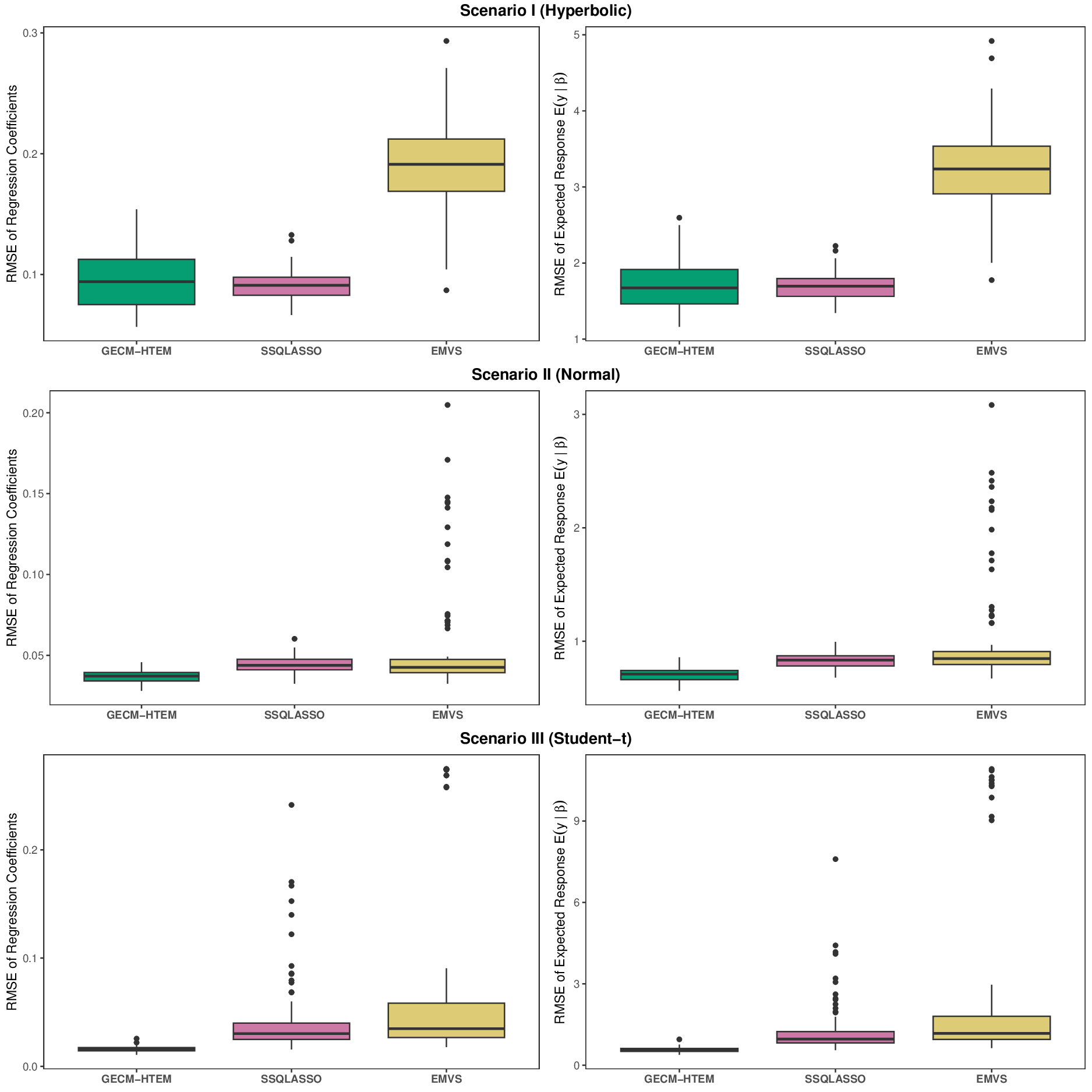}
\caption{RMSEs of all regression coefficients and the expected values of the response variables for the proposed GECM-HTEM method and the other studied MAP-based methods under different scenarios. The boxplots are based on 100 replications.}
\label{fig:simul_gecm-map_est}
\end{figure}

We now refer to the TPRs and TNRs of the methods in Table \ref{tab:simul_gecm-map_vs} to analyze their variable selection performance. For GECM-HTEM, we examine the inclusion indicators $\gamma_j$s ($j = 1, 2, \dots, p$), which are estimated using the median probability model \citep{barbieri2004}. More precisely, the $j$th inclusion indicator is estimated as $\hat{\gamma}_j = 1$, if the corresponding marginal posterior inclusion probability $\mathbb{P}(\gamma_j = 1 | \bm{Y})$ is at least 0.5. The estimated inclusion indicators are used to calculate the TPRs and TNRs against the true inclusion indicators On the other hand, for the MAP-based methods EMVS and SSQLASSO, we use the models returned by their respective R packages. Under every scenario, both the TPR and TNR for GECM-HTEM are close to 1, indicating the effectiveness of our proposed method in including signals and excluding noise covariates. SSQLASSO produces similar TPRs but lower TNRs than GECM-HTEM across all the scenarios, while EMVS shows the reverse pattern with similar TNRs and lower TPRs. In other words, SSQLASSO and EMVS favor larger and smaller models, respectively, while our method tends to make a trade-off, with a slight preference towards smaller models.
\begin{table}[!ht]
\centering
\caption{TPRs (TNRs) of all regression coefficients for the proposed GECM-HTEM method and the other studied MAP-based methods under different scenarios. The values are averaged over 100 replications.}
\begin{tabular}{ccccc}
    \hline
    & GECM-HTEM & SSQLASSO & EMVS \\
    \hline
    Scenario (I) (Hyperbolic) &  0.998 (0.998) & 1.000 (0.956) & 0.895 (1.000) \\
    Scenario (II) (Normal) &  1.000 (1.000) & 1.000 (0.707) & 0.993 (1.000) \\
    Scenario (III) (Student-$t$) &  1.000 (0.998) & 0.993 (0.947) & 0.893 (0.998) \\
    \hline
\end{tabular}
\label{tab:simul_gecm-map_vs}
\end{table}

\subsection{Predictive Performance}

The predictive performance of the three methods is analyzed and compared with a newly generated test dataset of 1000 observations. Using the posterior samples based on the training set, we obtain the point predictions for the test samples under GECM-HTEM. For uncertainty quantification, we use the estimated quantiles of the posterior predictive distribution to construct 90\% prediction intervals, and measure their empirical coverage probabilities and median widths. As far as SSQLASSO and EMVS are concerned, the point predictions are obtained using the MAP estimates based on the training samples. On the other hand, to create prediction intervals, the predictive errors are generated from their respective assumed error distributions, that is, Laplace for SSQLASSO and normal for EMVS, conditional on the estimated posterior modes.

The MeAD values of the point estimates, along with the empirical coverage probabilities and the median widths of the prediction intervals, are reported in Figure \ref{fig:simul_gecm-map_pred}. We observe that the predictive MeAD behaves similarly to the RMSEs of the regression coefficients and the mean response in Figure \ref{fig:simul_gecm-map_est}. That is, in terms of point prediction, GECM-HTEM shows competitive performance with SSQLASSO, with slight improvement for the normal and the Student-$t$ generating models. Furthermore, EMVS is consistently outperformed by yielding larger MeAD values than the other two methods. Promising behavior is also portrayed by the results on prediction intervals in Figure \ref{fig:simul_gecm-map_pred}. For Scenarios (I) and (II) corresponding to hyperbolic and normal errors, respectively, SSQLASSO has a general tendency of significant undercoverage with shorter widths compared to GECM-HTEM, while EMVS produces wider intervals than GECM-HTEM with notable overcoverage. In these scenarios, the coverage of GECM-HTEM is closer to the nominal level than that of the other methods. Under Student-$t$ errors in Scenario (III), the three methods typically exhibit similar widths and maintain a coverage close to the nominal level, although with large variability in coverage and occasional undercoverage for EMVS and SSQLASSO.
\begin{figure}[!ht]
\centering
    \includegraphics[scale=0.5]{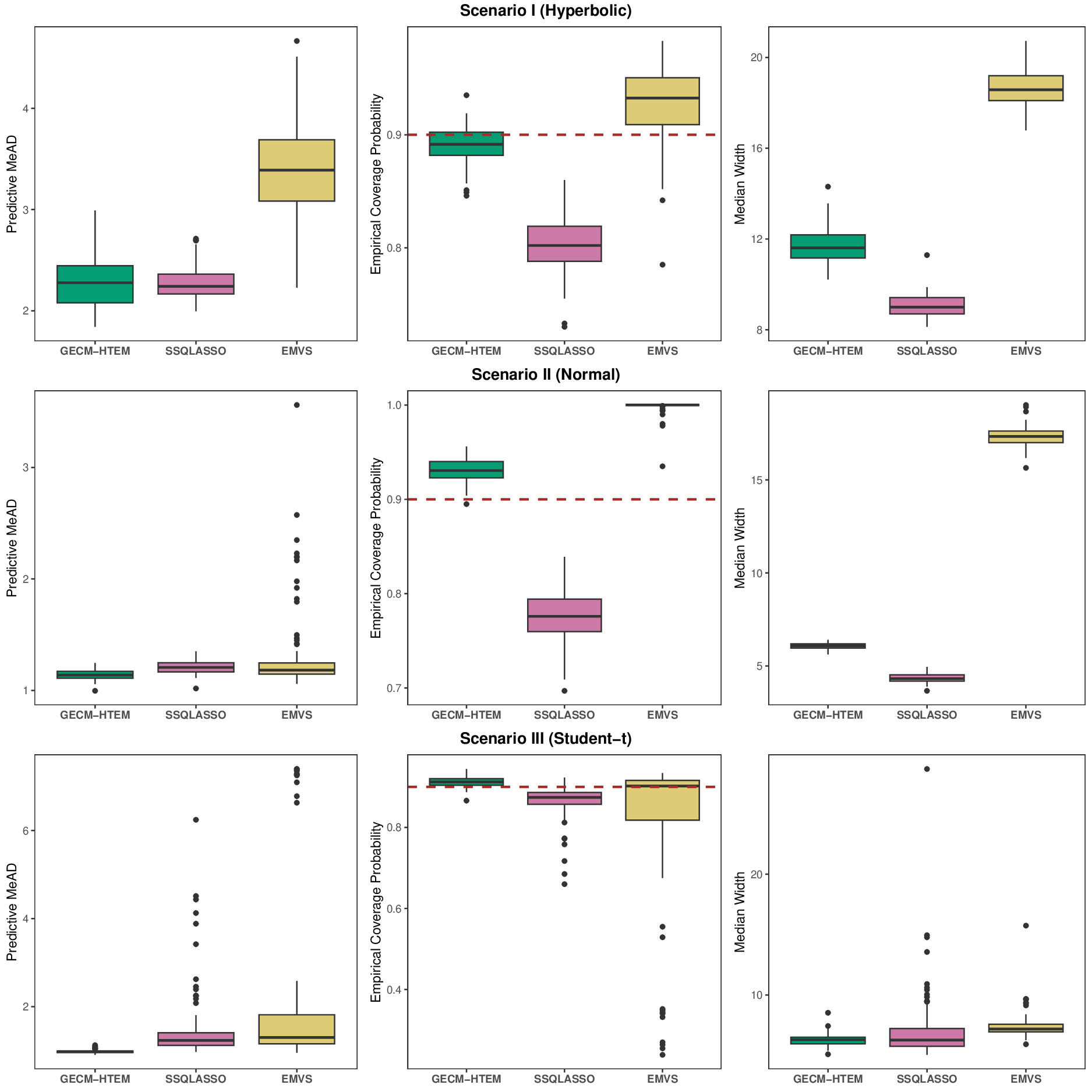}
\caption{Predictive performance of the proposed GECM-HTEM method and the other studied MAP-based methods under different scenarios. The boxplots are based on 100 replications.}
\label{fig:simul_gecm-map_pred}
\end{figure}

To summarize, GECM-HTEM exhibits competitive performance in coefficient estimation and point prediction, and superior performance in variable selection and uncertainty quantification through prediction intervals. Moreover, the consistency of the results for GECM-HTEM across the scenarios indicates its robustness and adaptability to different error distributions with varying degrees of tail heaviness.

\subsection{Computation Time} \label{sec:simul_time}

We conclude the simulation study by highlighting the computational advantage of our proposed method. Both the competing methods EMVS and SSQLASSO are implemented in C through their respective software packages, whereas our method is executed in R. Therefore, the computation times may not be strictly comparable, since lower level programming languages such as C are generally faster than higher level languages such as R. Despite this fact, we investigate the potential gain of using parallel computation to perform cross-validation for choosing the optimal value of the spike hyperparameter in the ECM step of our proposed method, which is a distinctive feature of our algorithm.

The computation times of the algorithms are reported in Table \ref{tab:simul_gecm-map_time}. For GECM-HTEM, we run the method using 1, 4, 6 and 10 CPU cores, respectively. Since any standard computer typically consists of 4 or 6 CPU cores, we choose these values to study the time taken by the proposed method in a traditional machine. The remaining choices, namely 1 and 10 cores, respectively correspond to no parallelization and parallelization over a larger number of cores than in a standard machine. We observe from Table \ref{tab:simul_gecm-map_time} that EMVS turns out to be the fastest among the three methods. On the other hand, while GECM-HTEM is initially slower than SSQLASSO using 1 core, it can be made faster through parallel computation, even though SSQLASSO is implemented in C while GECM-HTEM is implemented in R. Overall, our proposed GECM-HTEM method appears to be a computationally viable alternative for simultaneous tail estimation, variable selection and uncertainty quantification, compared to state-of-the-art MAP-based techniques.
\begin{table}[!ht]
\centering
\caption{Computation time (in minutes) for the proposed GECM-HTEM method using different number of CPU cores, and the other MAP-based methods, under a single replication of Scenario I with errors generated from the hyperbolic distribution. The running times are based on an Intel Xeon Gold 6230 processor with 80 cores.}
\begin{tabular}{cccccc}
    \hline
    \multicolumn{4}{c}{GECM-HTEM} & \multirow{2}{*}{SSQLASSO} & \multirow{2}{*}{EMVS} \\
    1 core & 4 cores & 6 cores & 10 cores & & \\
    \hline
    70.42 & 19.68 & 18.38 & 15.18 & 58.78 & 0.18 \\
    \hline
\end{tabular}
\label{tab:simul_gecm-map_time}
\end{table}

\section{Real Data Analysis} \label{sec:real}

Let us now proceed to examine the performance of GECM-HTEM using a few benchmark real datasets. In particular, we consider one genomic dataset on skin cutaneous melanoma, and two datasets from the financial sector on house prices and salaries. An additional dataset on house prices is studied in Appendix \ref{app:real_boston}, primarily to illustrate the ability of our proposed method in capturing model sparsity. Every dataset is divided into training and test sets with a 90\%-10\% split ratio, and the process is replicated 100 times to reduce sensitivity of the results to a specific choice of split. Our proposed method is compared with EMVS and SSQLASSO in terms of predictive accuracy through median absolute deviations of the predicted test sample response values, empirical coverage probabilities and median widths of 90\% prediction intervals. The results on predictive accuracy are depicted in Figure \ref{fig:real_pred}. Additionally, we inspect model parsimony across the methods for each dataset. It is worthwhile to mention that for our proposed GECM-HTEM method, the variable selection is carried out using the median probability model \citep{barbieri2004}, similar to the simulation study in Section \ref{sec:simul}.

\subsection{Skin Cutaneous Melanoma Dataset}

The Skin Cutaneous Melanoma dataset (available through the cBio Cancer Genomics Portal\footnote{\href{https://www.cbioportal.org/study/summary?id=skcm_tcga}{https://www.cbioportal.org/study/summary?id=skcm\_tcga}}) consists of multiple clinical and gene-expression features on several patients, and has been explored in genomic studies to investigate the association between gene expression profiles and cutaneous melanoma severity. After discarding the patients with missing values, the dataset contains information on 351 subjects. We consider the log-transformed Breslow's depth, a vital clinical marker of disease progression, as the response variable. The remaining clinical features, namely, age, initial cancer year and gender, are taken as covariates. It is to be noted that age and initial cancer year are continuous variables, while gender is a binary covariate. Furthermore, among the 20531 original gene-expression features, we select the top 1000 genes having the largest marginal correlations with the response variable, as covariates for our study. Accordingly, we perform our downstream analysis on 351 observations, with 1003 covariates and the logarithm of Breslow's depth as the response variable.

The results on predictive performance of the methods are reported in the top panel of Figure \ref{fig:real_pred}. We observe that while SSQLASSO typically yields the lowest predictive MeAD values, GECM-HTEM is a close competitor. On the other hand, in terms of empirical coverages and median widths of 90\% prediction intervals, both GECM-HTEM and SSQLASSO usually attain close to the nominal coverage, with slightly wider intervals for GECM-HTEM. EMVS is outperformed by the other two methods in all aspects, with larger predictive MeAD, substantial undercoverage, and overly narrow prediction intervals. In other words, GECM-HTEM exhibits satisfactory performance in uncertainty quantification through prediction intervals and a competitive performance in point prediction.

To study the model size selected by each method, we look into the clinical and gene-expression features separately. From the results in Table \ref{tab:real_skcm_vs}, it appears that for both types of features, GECM-HTEM tends to choose the smallest model, followed by EMVS and SSQLASSO. The distinction in model sizes across the methods is particularly noticeable in terms of the gene-expressions, with SSQLASSO retaining a large number of genes, as opposed to significantly smaller models for GECM-HTEM and EMVS.
\begin{table}[!ht]
\centering
\caption{Five-number summary statistics of the number of covariates (clinical and gene-expression) included in the model by the proposed GECM-HTEM method and the other studied MAP-based methods for the Skin Cutaneous Melanoma dataset. The statistics are computed based on 100 replications.}
\begin{tabular}{c|c|ccccc}
    \hline
    Covariates & Method & Minimum & 1st Quartile & Median & 3rd Quartile & Maximum \\
    \hline
    \multirow{3}{*}{\makecell{Clinical \\ (3)}}
        & GECM-HTEM & 1 & 1 & 1 & 1 & 2 \\
        & SSQLASSO & 1 & 2 & 2 & 3 & 3 \\
        & EMVS & 1 & 1 & 1 & 2 & 3 \\
    \hline
    \multirow{3}{*}{\makecell{Gene-\\Expression \\ (1000)}}
        & GECM-HTEM & 8 & 15 & 21 & 26 & 39 \\
        & SSQLASSO & 88 & 308 & 456 & 597 & 753 \\
        & EMVS & 23 & 38 & 51 & 60 & 84 \\
    \hline
\end{tabular}
\label{tab:real_skcm_vs}
\end{table}

\subsection{Ames Housing Dataset} \label{sec:real_ames}

\citet{decock2011} proposed the Ames Housing dataset (available in the \texttt{modeldata} package in R) as a touchstone for heavy-tailed regression modeling problems. The original dataset contains 2930 observations on 74 variables, including house price and 73 covariates potentially affecting the prices. To capture potential nonlinear effects, we create bins for some of the numeric covariates, thereby converting them to categorical variables. In particular, we consider 12 bins each for the year of building a house and the year of its remodeling. Thereafter, we include interaction terms between the neighborhood, and year of building, year of remodeling and year of sale (considered to be categorical with sale years as levels). Using dummy variables for each level of the categorical variables, the resulting design matrix consists of 2930 rows and 962 columns, and the log-transformed selling price is taken as the response variable.

In this example, SSQLASSO can be successfully executed for only 11 out of the 100 replicates using the \texttt{emBayes} package in R. Therefore, we report our findings for the three methods based on these 11 replicates in the main paper, and specify the results for GECM-HTEM and EMVS based on all the 100 replicates in Appendix \ref{app:real_ames}. Using the 11 replicates, the boxplots in the middle panel of Figure \ref{fig:real_pred} indicate slightly lower MeAD for SSQLASSO than GECM-HTEM. The prediction intervals for GECM-HTEM are a bit wider and nearly attain the nominal coverage, compared to a slight undercoverage by SSQLASSO. Both the methods significantly outperform EMVS, which yields large predictive MeAD, along with wide prediction intervals and gross undercoverage.

We now examine the number of predictors selected by each method based on the 11 replicates. A quick glance at the five-number summaries in Table \ref{tab:real_ames_vs} reveals a sharp distinction in the model sparsity across the methods. Both EMVS and GECM-HTEM favor smaller models, with EMVS selecting the least number of covariates. In contrast, SSQLASSO exhibits limited shrinkage by dropping only a few covariates in most replicates.
\begin{table}[!ht]
\centering
\caption{Five-number summary statistics of the number of covariates included in the model by the proposed GECM-HTEM method and the other studied MAP-based methods for the Ames Housing dataset. The statistics are computed based on 11 replications for which SSQLASSO can be successfully executed.}
\begin{tabular}{c|ccccc}
    \hline
    Method & Minimum & 1st Quartile & Median & 3rd Quartile & Maximum \\
    \hline
    GECM-HTEM & 43 & 48 & 54 & 56 & 60 \\
    SSQLASSO & 827 & 922 & 929 & 947 & 955 \\
    EMVS & 7 & 10 & 14 & 15 & 15 \\
    \hline
\end{tabular}
\label{tab:real_ames_vs}
\end{table}

\subsection{University of Vermont Faculty Salaries Dataset} \label{sec:real_salary}
We analyze data on faculty salaries for the year 2020 at the University of Vermont obtained from Kaggle, using the log-transformed salary as the response variable, with job title, department and college serving as predictors. Furthermore, since all the 3 covariates are categorical with multiple levels, we create dummy variables for each level, and also incorporate the interactions between colleges and job titles. The resulting dataset consists of 1756 observations, with 936 covariates and the logarithm of salary as the response variable.

It is to be noted that 605 of the interaction terms are not observed in the data. Accordingly, these 605 columns in the design matrix contain only zero entries across all rows. Instead of removing these columns manually, we retain them to check if the studied methods can effectively discard them. We observe that while GECM-HTEM and EMVS can successfully exclude these columns, the implementation of SSQLASSO via the \texttt{emBayes} package results in a runtime error. Therefore, we execute SSQLASSO on the smaller design matrix containing the remaining 331 columns, after manually dropping the 605 columns with all zero entries.

The bottom panel of Figure \ref{fig:real_pred} shows the predictive performance of the methods. The predictive MeADs for GECM-HTEM and SSQLASSO are close to one another, with slightly lower values for GECM-HTEM, while those for EMVS are substantially larger. All the methods exhibit undercoverage for 90\% prediction intervals. The coverage is largest for EMVS with substantially large widths. The proposed GECM-HTEM method attains a coverage a bit lower than that of EMVS, along with a smaller width. SSQLASSO produces narrower intervals and notably lower coverage than the other methods.

To compare the model complexities, we inspect the number of predictors selected by each method among the 331 covariates used in fitting SSQLASSO. The results in Table \ref{tab:real_salary_vs} repeat a similar story like the Ames Housing dataset. EMVS chooses the smallest model, followed by GECM-HTEM, while SSQLASSO retains a substantial number of predictors.
\begin{table}[!ht]
\centering
\caption{Five-number summary statistics of the number of covariates (among the 331 covariates used to run SSQLASSO) included in the model by the proposed GECM-HTEM method and the other studied MAP-based methods for the Faculty Salaries dataset. The statistics are computed based on 100 replications.}
\begin{tabular}{c|ccccc}
    \hline
    Method & Minimum & 1st Quartile & Median & 3rd Quartile & Maximum \\
    \hline
    GECM-HTEM & 28 & 43 & 44 & 47 & 53 \\
    SSQLASSO & 87 & 319 & 323 & 325 & 328 \\
    EMVS & 5 & 5 & 5 & 5 & 5 \\
    \hline
\end{tabular}
\label{tab:real_salary_vs}
\end{table}

\begin{figure}[!ht]
\centering
    \includegraphics[scale=0.5]{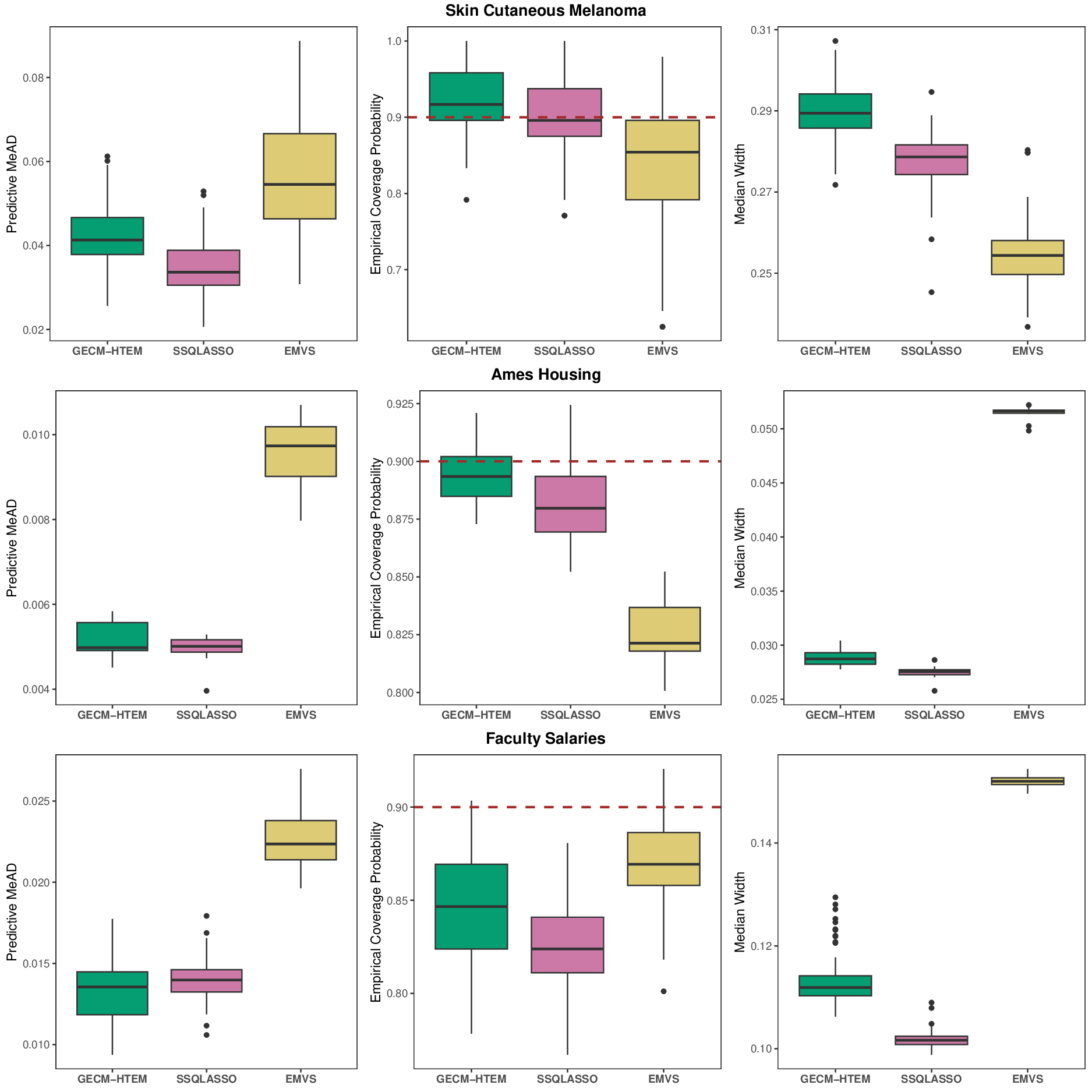}
\caption{Predictive performance of the proposed GECM-HTEM method and the other studied MAP-based methods for different real datasets. In the plots for empirical coverage probability, the dashed lines mark a coverage of 90\%. The boxplots for Skin Cutaneous Melanoma and Faculty Salaries datasets are based on 100 replications. The boxplots for Ames Housing dataset are based on 11 replications for which SSQLASSO can be successfully executed.}
\label{fig:real_pred}
\end{figure}

\section{Conclusion} \label{sec:conc}

In this paper, we have proposed a two-step approach, namely, GECM-HTEM, by combining MAP estimation through the ECM algorithm, along with a Gibbs sampler, for performing model selection and posterior computation within a flexible-tailed error framework. Our model accommodates tail behavior ranging from the light tails of the normal distribution to the heavier tails of the hyperbolic and Student-$t$ distributions, while estimating tail heaviness from the data. This approach leverages the computational advantage of MAP estimation and inferential benefits of MCMC-based algorithms. Compared with two representative MAP estimators, the proposed GECM-HTEM method has shown competitive performance in coefficient estimation and point prediction, and has demonstrated superior performance in variable selection and uncertainty quantification.

Several other possibilities can be investigated as a sequel to this paper. Studying the Bayesian bootstrap \citep{rubin1981, nie2023bbssl, nie2023deepbb} as an alternative to the Gibbs sampler for uncertainty quantification is a viable direction. Another possible option is to explore global-local shrinkage priors, like the horseshoe prior \citep{carvalho2009, carvalho2010, de2024, fan2026}, in place of spike-and-slab priors on the regression coefficients. Lastly, inducing asymmetry in the error distribution serves as a suitable extension for further work.

\section*{Data Availability}

Data sharing is not applicable to this article as no new data were created or analyzed in this study.

\section*{Use of Generative Artificial Intelligence (AI)}

Generative AI tools were used to check algebraic derivations, code and grammar. All content was subsequently reviewed and edited by the authors, who assume responsibility for the final manuscript.

\section*{Acknowledgments}

A part of this work was completed by Shamriddha De in his doctoral thesis under the supervision of Joyee Ghosh.

\bibliographystyle{apalike}
\bibliography{References}

\begin{appendices}

\section{Proof of Proposition \ref{prop:scalemix_hyperbolic}} \label{app:scalemix_hyperbolic}

\textbf{Proposition 1.} \textit{Let $A$ and $a$ be random variables such that $A|a^2 \sim \mathrm{N}(0, \rho^2a^2)$ and $a^2 \sim \mathrm{GIG}(1, \eta, \eta)$. Then $A$ is distributed as $\mathrm{Hyperbolic}(\eta, \rho^2)$.}

\begin{proof}
The conditional density of $A$ given $a^2$ is
\begin{equation*}
    f_A(A \mid a^2)
    = \dfrac{1}{\sqrt{2\pi\rho^2a^2}} \exp{\left(-\dfrac{A^2}{2\rho^2a^2}\right)},
    \quad -\infty < A < \infty,
\end{equation*}
while the density of $a^2$ is
\begin{equation*}
    f_{a^2}(a^2)
    = \dfrac{1}{2 K_1(\eta)} \exp{\left\{-\dfrac{\eta}{2} \left(a^2 + \dfrac{1}{a^2}\right)\right\}},
    \quad x > 0.
\end{equation*}
The density of $A$, marginalized over $a^2$, is obtained as
\begin{align*}
    f_A(A)
    & = \int_0^\infty f_A(A \mid a^2) f_{a^2}(a^2) da^2, \\
    & = \dfrac{1}{2 K_1(\eta) \sqrt{2\pi\rho^2}} \int_0^\infty \dfrac{1}{\sqrt{a^2}} \exp{\left\{-\dfrac{1}{2} \left( \eta a^2 + \dfrac{\eta + A^2/\rho^2}{a^2} \right)\right\}} da^2.
\end{align*}
The above integral can be evaluated using the $\mathrm{GIG}(\lambda, a, b)$ kernel in \eqref{eq:gigdensity} with $\lambda = 1/2$, $a = \eta$, $b = \eta + A^2/\rho^2$, thereby yielding
\begin{equation*}
    f_A(A)
    = \dfrac{1}{\sqrt{2\pi\rho^2} K_1(\eta)} \left(\dfrac{\eta + A^2/\rho^2}{\eta}\right)^{1/4} K_{1/2}\left(\sqrt{\eta(\eta + A^2/\rho^2)}\right),
\end{equation*}
which further simplifies to the density function of the $\mathrm{Hyperbolic}(\eta, \rho^2)$ distribution in \eqref{eq:hyperbolicdensity}, using the closed form expression $K_{1/2}(x) = \sqrt{\pi/2x} \, e^{-x}$.
\end{proof}

\section{Proof of Proposition \ref{prop:ecm-htem_max}} \label{app:ecm-htem_max}

\textbf{Proposition 2.} \textit{Under the likelihood and prior specification in \eqref{eq:ecm-htem_lik-priors}, let us assume that $\kappa_0 > 0$, $\kappa_1 > 0$, $a_{\rho} > 0$, $b_{\rho} > 0$, $\lambda_{\tau} > 0$, $c_\theta > 1$ and $d_\theta > 1$. For each fixed error model, namely the hyperbolic error model with $\alpha = 0$ and $\eta = \eta_0$, and the Student-$t$ error model with $\alpha = 1$ and $\eta = \eta_1$, conditional on the quantities computed in the E-step and on the current values of the remaining parameters, each continuous CM-step with respect to $\bm{\beta}$, $\rho^2$, $\tau^2$, $\theta$, $\sigma^2_i$ ($i = 1, 2, \dots, n$), has a unique global maximizer in the interior of the corresponding parameter space. Moreover, the unique maximizer in each case is attained by the corresponding closed-form update in \eqref{eq:ecm-htem_cmstep}.}

\begin{proof}
Throughout the proof, we assume that the error model is fixed either at the hyperbolic error model $(\alpha = 0, \eta = \eta_0)$ or at the Student-$t$ error model $(\alpha = 1, \eta = \eta_1)$, with their respective fixed shape parameters $\eta = \eta_0$ and $\eta = \eta_1$. Furthermore, while performing maximization over a parameter, we condition on the quantities computed in the E-step as well as the current values of the remaining parameters. From (\ref{eq:ecm-htem_estep1}), we recall that 
\begin{equation*}
    \mathbb{E}^*_{\bm{\gamma}}[\gamma_j]
    = \mathbb{P}(\gamma_j = 1 \mid \bm{Y}, \alpha, \eta = \eta_{\alpha}, \hat{\bm{\beta}}, \hat{\rho}^2, \hat{\theta}, \hat{\tau}^2, \hat{\bm{\sigma}})
    = g_j
\end{equation*}
which implies $0 \leq g_j \leq 1$. From \eqref{eq:ecm-htem_objective} and \eqref{eq:ecm-htem_step2}, we have 
$
    \Psi
    = \diag((\mathbb{E}^*_{\bm{\gamma}}[1/v_j]))
    = \diag\left( \frac{g_j}{\kappa_1} + \frac{1-g_j}{\kappa_0} \right)
$, 
for $j = 1, 2, \dots, p$. Since $\kappa_0 > 0$ and $\kappa_1 > 0$, every diagonal entry of $\Psi$ is positive, and accordingly, $\Psi$ is a positive definite matrix. Also, since $\sigma_i^2$, $i = 1, \dots, n$, the matrix $\Sigma^{-1}$ is positive definite.
\begin{enumerate}

\item 
We first consider the CM-step with respect to $\bm{\beta}$. Let $q_{\bm{\beta}}(\bm{\beta})$ denote the terms in $Q^{\alpha}$ that depend on $\bm{\beta}$, with the E-step quantities and all remaining parameters held fixed. Then we have
\begin{equation*}
    q_{\bm{\beta}}(\bm{\beta})
    =
    - \frac{1}{2\rho^2} (\bm{Y} - X\bm{\beta})^\top \Sigma^{-1} (\bm{Y} - X\bm{\beta})
    - \frac{1}{2\rho^2\tau^2} \bm{\beta}^\top \Psi \bm{\beta}.
\end{equation*}
The Hessian of $q_{\bm{\beta}}(\bm{\beta})$ is
\begin{equation*}
    \nabla_{\bm{\beta}}^2 q_{\bm{\beta}}(\bm{\beta})
    =
    -\frac{1}{\rho^2} \left( X^\top\Sigma^{-1}X + \frac{1}{\tau^2}\Psi \right).
\end{equation*}
For any $\bm{h} \in \mathbb{R}^p$ such that $\bm{h} \neq \bm{0}$,
\begin{equation*}
    \bm{h}^\top \left( X^\top\Sigma^{-1}X + \frac{1}{\tau^2}\Psi \right) \bm{h}
    =
    (X\bm{h})^\top\Sigma^{-1}(X\bm{h}) + \frac{1}{\tau^2}\bm{h}^\top\Psi\bm{h}
    > 0.
\end{equation*}
This is because, the first term is nonnegative since $\Sigma^{-1}$ is positive definite, while the second term is positive since $\tau^2 > 0$, $\Psi$ is positive definite and $\bm{h} \neq \bm{0}$. Hence the sum is positive. Thus 
$
    \left( X^\top\Sigma^{-1}X + \frac{1}{\tau^2}\Psi \right)
$
is a positive definite matrix, which implies 
$
    \nabla_{\bm{\beta}}^2 q_{\bm{\beta}}(\bm{\beta})
    = -\frac{1}{\rho^2} \left( X^\top\Sigma^{-1}X + \frac{1}{\tau^2}\Psi \right)
$ 
is a negative definite matrix. Therefore, $q_{\bm{\beta}}(\bm{\beta})$ is strictly concave in $\bm{\beta}$. Setting the gradient of $q_{\bm{\beta}}(\bm{\beta}) $ to $\bm{0}$, that is, $\nabla_{\bm{\beta}}q_{\bm{\beta}}(\bm{\beta}) = \bm{0}$, we have
\begin{equation*}
    X^\top\Sigma^{-1}\bm{Y} - X^\top\Sigma^{-1}X\bm{\beta} - \frac{1}{\tau^2}\Psi\bm{\beta}
    =
    \bm{0},
\end{equation*}
or equivalently
\begin{equation*}
    \left( X^\top\Sigma^{-1}X + \frac{1}{\tau^2}\Psi \right) \bm{\beta}
    =
    X^\top \Sigma^{-1} \bm{Y}.
\end{equation*}
Since 
$
    \left( X^\top\Sigma^{-1}X + \frac{1}{\tau^2}\Psi \right)
$
is positive definite, the above equation has a unique solution. By strict concavity of $q_{\bm{\beta}}(\bm{\beta})$, that solution is the unique global maximizer, and is the closed-form $\bm{\beta}$-update in \eqref{eq:ecm-htem_cmstep}.

\item 
Next, we consider the CM-step with respect to $\rho^2$. Let $z = \rho^2 > 0$, $S(\bm{\beta}, \bm{\sigma}) = (\bm{Y} - X\bm{\beta})^\top \Sigma^{-1} (\bm{Y} - X\bm{\beta})$, and $B(\bm{\beta}) = \bm{\beta}^\top \Psi \bm{\beta}$. Let $q_{z}(z)$ denote the terms in $Q^{\alpha}$ that depend on $z$ with the E-step quantities and all remaining parameters held fixed. Then we have,
\begin{equation*}
    q_{z}(z)
    =
    - \left( \frac{n+p}{2} + a_{\rho} + 1 \right) \log{z}
    - \frac{1}{z} \left\{ \frac{S(\bm{\beta},\bm{\sigma})}{2} + \frac{B(\bm{\beta})}{2\tau^2} + b_{\rho} \right\}.
\end{equation*}
Let 
$
    A = \frac{n+p}{2} + a_{\rho} + 1
$ 
and 
$
    C = \frac{S(\bm{\beta},\bm{\sigma})}{2} + \frac{B(\bm{\beta})}{2\tau^2} + b_{\rho}
$. 
Then we have $A > 0$ and $C > 0$, and 
$
    q_{z}(z) = -A\log{z} - \frac{C}{z}
$. 
Accordingly, the first derivative of $q_z(z)$ is given by
$
    q_{z}'(z) = -\frac{A}{z} + \frac{C}{z^2} = \frac{C-A z}{z^2}
$. 
Since $z^2 > 0$, we have $q_{z}'(z) > 0$ for $z < C/A$ and $q_{z}'(z) < 0$ for $z > C/A$. Therefore, $q_z(z)$ increases up to $C/A$ and decreases thereafter. Thus the unique global maximizer is attained at $z = C/A > 0$, which is exactly the closed-form $\rho^2$-update in \eqref{eq:ecm-htem_cmstep}.

\item 
The proof for the CM-step with respect to $\tau^2$ is analogous to that of $\rho^2$. Let $u = \tau^2 > 0$. Let $q_u(u)$ denote the part of $Q^{\alpha}$ that depends on $u$, while the E-step quantities and other parameters are held fixed. We have 
\begin{equation*}
    q_{u}(u)
    =
    - \left( \frac{p + \lambda_{\tau}}{2} + 1 \right) \log{u}
    - \frac{1}{u} \left\{ \frac{\bm{\beta}^\top\Psi\bm{\beta}}{2\rho^2} + \frac{\lambda_{\tau}}{2} \right\}.
\end{equation*}
Let 
$
    D = \frac{p + \lambda_{\tau}}{2} + 1
$ 
and 
$
    E = \frac{\bm{\beta}^\top\Psi\bm{\beta}}{2\rho^2} + \frac{\lambda_\tau}{2}
$. 
Then we have $D > 0$ and $E > 0$, and 
$
    q_u(u) = -D\log{u} - \frac{E}{u}
$.
The first derivative of $q_u(u)$ is
$
    q_u'(u) = -\frac{D}{u} + \frac{E}{u^2} = \frac{E - Du}{u^2}.
$
Since $u^2 > 0$, the derivative is positive for $u < E/D$ and
negative for $u>E/D$, which implies $q_u(u)$ increases up to $E/D$ and decreases thereafter. Thus the unique global maximizer is attained at $u = E/D > 0$, which is exactly the closed-form $\tau^2$-update in \eqref{eq:ecm-htem_cmstep}.

\item 
We now consider the CM-step with respect to $\theta$. Let $q_{\theta}(\theta)$ denote the part of $Q^{\alpha}$ that depends on $\theta$, while the E-step quantities and other parameters are held fixed.
Then we have
\begin{equation*}
    q_{\theta}(\theta)
    =
    \left( c_{\theta} + \sum_{j=1}^p g_j - 1 \right) \log{\theta}
    +
    \left( d_{\theta} + \sum_{j=1}^p (1-g_j) - 1 \right) \log{(1-\theta)},
\end{equation*}
where $0 < \theta < 1$. Let
$
    F = c_{\theta} + \sum_{j=1}^p g_j - 1
$ 
and
$
    G = d_{\theta} + \sum_{j=1}^p (1-g_j) - 1
$.
Because $c_{\theta} > 1$, $d_{\theta} > 1$ and $0 \leq g_j \leq 1$, we have $F > 0$ and $G > 0$. It follows that 
$
    q_{\theta}(\theta) = F\log{\theta} + G\log{(1-\theta)}
$ 
and its derivative is
$
    q_{\theta}'(\theta) = \frac{F}{\theta} - \frac{G}{1-\theta}
$. 
So, the second derivative becomes
$
    q_{\theta}''(\theta) = -\frac{F}{\theta^2} - \frac{G}{(1-\theta)^2}
    < 0
$, 
for $0 < \theta < 1$. Thus $q_{\theta}(\theta)$ is strictly concave on $(0, 1)$. Solving $q_{\theta}'(\theta) = 0$ gives $\theta = F / (F+G)$, which belongs to $(0, 1)$. Hence this point is the unique global maximizer in the interior of the parameter space of $\theta$, and it is exactly the closed-form $\theta$-update in \eqref{eq:ecm-htem_cmstep}.

\item 
Lastly, it remains to verify the CM-steps corresponding to $\sigma_i^2$, $i = 1, 2, \dots, n$. We fix $i \in \{1, \dots, n\}$, and let $s_i = \sigma_i^2 > 0$ and $e_i = y_i - \bm{x}_i^\top\bm{\beta}$, where $\bm{x}_i$ is the $i$th row of the design matrix $X$. Let $q_{s_i, \alpha}(s_i)$ denote the part of $Q^{\alpha}$ that depends on $s_i$, while the E-step quantities and other parameters are held fixed. We consider the two cases $\alpha = 0$ and $\alpha = 1$ separately.
\begin{enumerate}

\item At first, we first consider the hyperbolic error model with $\alpha = 0$ and $\eta = \eta_0$. In this case,  we have 
\begin{equation*}
    q_{s_i, 0}(s_i)
    =
    - \frac{1}{2}\log{s_i}
    - \frac{e_i^2}{2 \rho^2 s_i}
    - \frac{\eta_0}{2} \left( s_i + \frac{1}{s_i} \right).
\end{equation*}
Differentiating $q_{s_i, 0}(s_i)$ with respect to $s_i$ gives
\begin{equation*}
    q_{s_i, 0}'(s_i)
    = - \frac{1}{2s_i} + \frac{e_i^2}{2\rho^2s_i^2} - \frac{\eta_0}{2} + \frac{\eta_0}{2s_i^2}
    = \frac{1}{2s_i^2} \left\{ - \eta_0s_i^2 - s_i + \eta_0 + \frac{e_i^2}{\rho^2} \right\}
    = \frac{h_{s_i, 0}(s_i)}{2s_i^2},
\end{equation*}
where $h_{s_i, 0}(s_i) = -\eta_0s_i^2 - s_i + \eta_0 + \frac{e_i^2}{\rho^2}$. Since $2s_i^2 > 0$, the sign of $q_{s_i, 0}'(s_i)$ is same as the sign of $h_{s_i, 0}(s_i)$. We note that solving $h_{s_i, 0}(s_i) = 0$ is equivalent to solving
$
    \eta_0s_i^2 + s_i - \left(\eta_0 + \frac{e_i^2}{\rho^2}\right) = 0
$. 
By the quadratic formula, the positive and negative roots are
\begin{equation*}
    s_{i, 0}^{+}
    = \frac{
        -1
        + \sqrt{
            1 + 4\eta_0 + \left( \eta_0+\frac{e_i^2}{\rho^2} \right)
        }
    }{2\eta_0}
    \text{ and }
    s_{i, 0}^{-}
    = \frac{
        -1
        - \sqrt{
            1 + 4\eta_0 + \left( \eta_0+\frac{e_i^2}{\rho^2} \right)
        }
    }{2\eta_0},
\end{equation*}
respectively. Clearly, the negative root $s_{i, 0}^{-}$ lies outside the domain
$s_i > 0$. Now, we have 
$
    h_{s_i, 0}(s_i) = -\eta_{0} (s_i - s_{i, 0}^{+}) (s_i - s_{i, 0}^{-})
$. 
As $\eta_0 > 0$ and $(s_i - s_{i, 0}^{-}) > 0$, the sign of $h_{s_i, 0}(s_i)$ is determined by $-(s_i - s_{i, 0}^{+})$. Thus $h_{s_i, 0}(s_i) > 0$ if $0 < s_i < s_{i, 0}^{+}$ and $h_{s_i, 0}(s_i) < 0$ if $s_{i, 0}^{+} < s_i < \infty$. Since the sign of $q_{s_i, 0}'(s_i)$ is same as that of $h_{s_i, 0}(s_i)$, it follows that $q_{s_i, 0}(s_i)$ increases on $(0, s_{i, 0}^{+})$ and decreases on $(s_{i, 0}^{+}, \infty)$. Therefore, $s_{i, 0}^{+}$ is the unique global maximizer, which matches with  the $\sigma_i^2$-update for the hyperbolic error model in \eqref{eq:ecm-htem_cmstep}.

\item For the Student-$t$ error model with $\alpha = 1$ and $\eta = \eta_1$, we have
\begin{align*}
    q_{s_i, 1}(s_i)
    & = - \frac{1}{2} \log{s_i}
        - \frac{e_i^2}{2 \rho^2 s_i}
        - \left(\frac{\eta_1}{2} + 1\right) \log{s_i}
        - \frac{\eta_1}{2s_i} 
\\
    & = - \frac{\eta_1 + 3}{2} \log{s_i}
        - \frac{1}{2s_i} \left(\eta_1 + \frac{e_i^2}{\rho^2}\right).
\end{align*}
Differentiating $q_{s_i, 1}(s_i)$ with respect to $s_i$ gives
\begin{equation*}
    q_{s_i, 1}'(s_i)
    = - \frac{\eta_1+3}{2s_i}
        + \frac{1}{2s_i^2} \left( \eta_1 + \frac{e_i^2}{\rho^2} \right)
    = \frac{1}{2s_i^2} \left\{
        \eta_1
        + \frac{e_i^2}{\rho^2}
        - (\eta_1 + 3)s_i
    \right\}
\end{equation*}
Since $2s_i^2 > 0$, 
$q_{s_i, 1}'(s_i) > 0$ 
for $0 < s_i < \frac{\eta_1 + \frac{e_i^2}{\rho^2}}{\eta_1 + 3}$ 
and 
$q_{s_i, 1}'(s_i) < 0$ 
for $s_i > \frac{\eta_1 + \frac{e_i^2}{\rho^2}}{\eta_1 + 3}$. 
Thus $q_{s_i,1}(s_i)$ increases up to 
$\frac{\eta_1 + \frac{e_i^2}{\rho^2}}{\eta_1 + 3}$ 
and decreases thereafter. Thus the unique global maximizer is attained at 
$
    s_{i, 1}
    = \frac{\eta_1 + \frac{e_i^2}{\rho^2}}{\eta_1 + 3}
    > 0
$, 
since $\eta_1 > 0$ and $e_i^2/\rho^2 \geq 0$. This matches with the $\sigma_i^2$-update in \eqref{eq:ecm-htem_cmstep} for the Student-$t$ error model.

\end{enumerate}

\end{enumerate}
Combining the preceding arguments, each continuous CM-step with respect to $\bm{\beta}$, $\rho^2$, $\tau^2$, $\theta$, $\sigma^2_i$ ($i = 1, 2, \dots, n$) has a unique global maximizer in the interior of the corresponding parameter space, and the unique maximizer is attained by the corresponding closed-form update in \eqref{eq:ecm-htem_cmstep}.
\end{proof}

\section{Additional Results for Real Data Analysis} \label{app:real}

\subsection{Ames Housing Dataset} \label{app:real_ames}

In this section, we reproduce the results for GECM-HTEM and EMVS based on all the 100 replicates intended for our analysis. The results on both predictive performance (Figure \ref{fig:app_real_ames_pred}) and model size (Table \ref{tab:app_real_ames_vs}) for the methods are nearly similar to those stated in Section \ref{sec:real_ames}, with GECM-HTEM visibly outperforming EMVS in every aspect of comparison.
\begin{figure}[!ht]
\centering
    \includegraphics[scale=0.6]{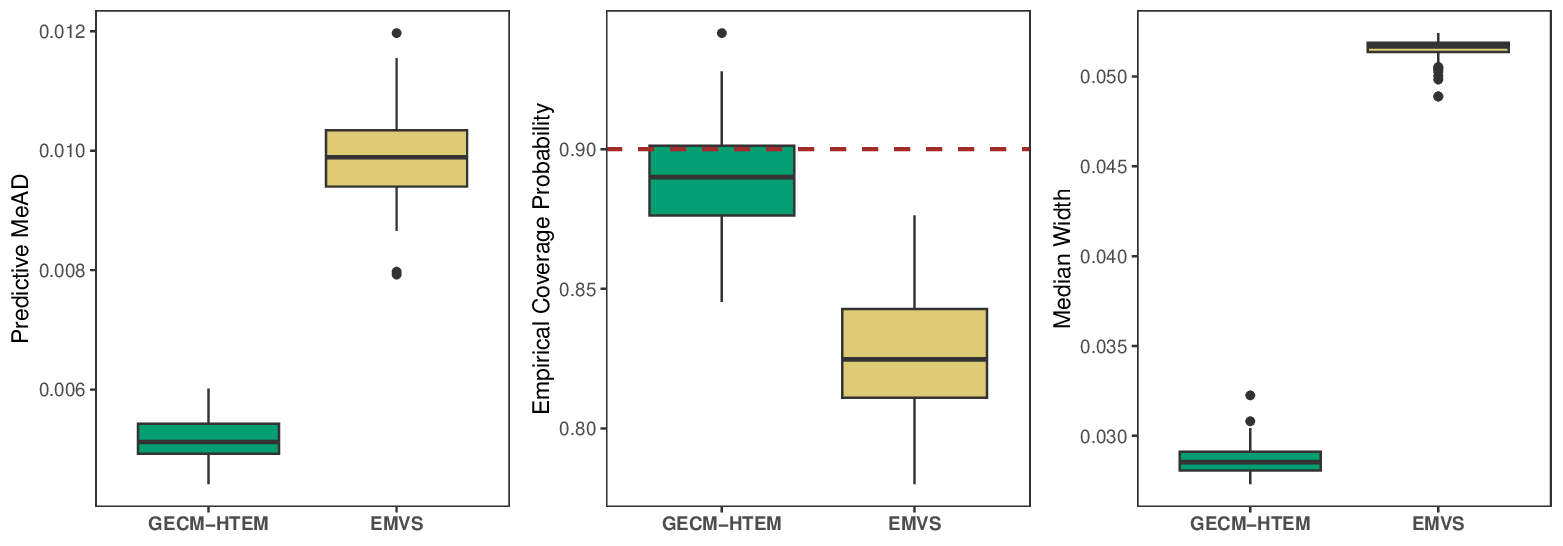}
\caption{Predictive performance of the proposed GECM-HTEM method and EMVS for Ames Housing dataset. In the middle plot, the dashed line marks a coverage of 90\%. The boxplots are based on 100 replications.}
\label{fig:app_real_ames_pred}
\end{figure}
\begin{table}[!ht]
\centering
\caption{Five-number summary statistics of the number of covariates included in the model by the proposed GECM-HTEM method and EMVS for the Ames Housing dataset. The statistics are computed based on 100 replications.}
\begin{tabular}{c|ccccc}
    \hline
    Method & Minimum & 1st Quartile & Median & 3rd Quartile & Maximum \\
    \hline
    GECM-HTEM & 40 & 49 & 54 & 55 & 62 \\
    EMVS & 3 & 13 & 14 & 14 & 16 \\
    \hline
\end{tabular}
\label{tab:app_real_ames_vs}
\end{table}

\subsection{Boston Housing Dataset} \label{app:real_boston}

The Boston Housing dataset from the \texttt{MASS} package in R, which has been extensively used as a benchmark example for regression modeling with heavy-tailed errors, is used as an illustrative example in this paper. It consists of 506 observations on median house value, regressed on 13 neighborhood characteristics as covariates. To achieve an approximate symmetry in the distribution of the residuals, the logarithm of the median house value is considered as the response variable. Following \citet{ghosh2013, de2024, de2026}, we add 1000 noise variables, independently generated from a standard normal distribution, resulting in $p = 1013$, to make the problem more interesting for variable selection. This enables us to compare how effectively the competing methods can drop the noise variables from the model, which can in turn affect the predictive performance.

From the boxplots in Figure \ref{fig:app_real_boston_pred}, we observe that while SSQLASSO typically yields the lowest predictive MeAD values, GECM-HTEM is a close competitor. On the other hand, in terms of empirical coverage and median width of 90\% prediction intervals, both GECM-HTEM and SSQLASSO usually attain coverage close to the nominal level, with slightly wider intervals for GECM-HTEM. EMVS is outperformed by the other two methods in all aspects, with larger predictive MeAD, substantial undercoverage, and overly narrow prediction intervals. In other words, GECM-HTEM exhibits satisfactory performance in uncertainty quantification through prediction intervals and a competitive performance in point prediction.
\begin{figure}[!ht]
\centering
    \includegraphics[scale=0.6]{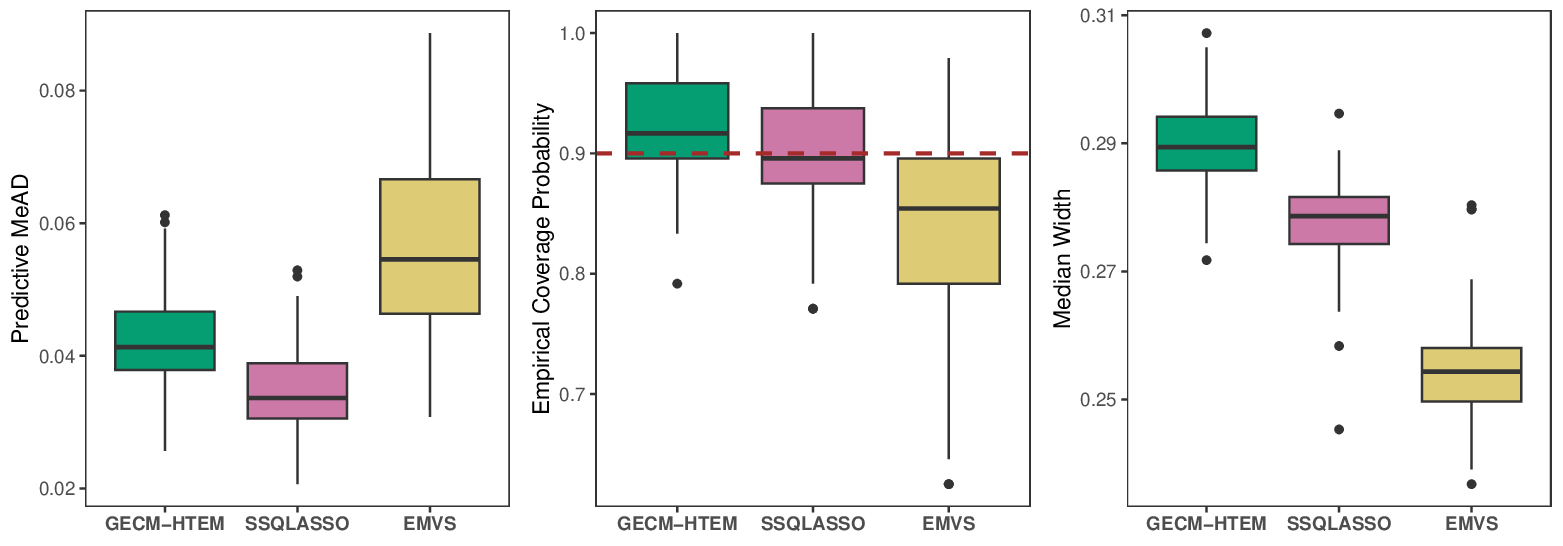}
\caption{Predictive performance of the proposed GECM-HTEM method and the other studied MAP-based methods for Boston Housing dataset. In the middle plot, the dashed line marks a coverage of 90\%. The boxplots are based on 100 replications.}
\label{fig:app_real_boston_pred}
\end{figure}

We now investigate the model size chosen by each method. The design matrix is known to contain 1000 noise variables, along with the 13 original predictors. With this information up front, it will be useful to study these two types of covariates separately. In particular, for each type, we look at the five-number summary statistics of the number of covariates included in the model. The summaries are taken over the 100 replicates and are reported in Table \ref{tab:app_real_boston_vs}. It is evident from Table \ref{tab:app_real_boston_vs} that for most of the replicates, all the three methods perform similarly in choosing the original covariates and dropping the noise variables. However, we also observe that SSQLASSO includes nearly all of the original covariates and retains a large number of noise predictors in some replicates, indicating its tendency to choose bigger models.
\begin{table}[!ht]
\centering
\caption{Five-number summary statistics of the number of covariates (original and noise) included in the model by the proposed GECM-HTEM method and the other studied MAP-based methods for the Boston Housing dataset. The statistics are computed based on 100 replications.}
\begin{tabular}{c|c|ccccc}
    \hline
    Covariates & Method & Minimum & 1st Quartile & Median & 3rd Quartile & Maximum \\
    \hline
    \multirow{3}{*}{\makecell{Original \\ (13)}}
        & GECM-HTEM & 4 & 5 & 5 & 5 & 5 \\
        & SSQLASSO & 3 & 4 & 4 & 5 & 12 \\
        & EMVS & 3 & 3 & 3 & 3 & 5 \\
    \hline
    \multirow{3}{*}{\makecell{Noise \\ (1000)}}
        & GECM-HTEM & 0 & 0 & 0 & 0 & 3 \\
        & SSQLASSO & 0 & 0 & 0 & 0 & 819 \\
        & EMVS & 0 & 0 & 0 & 0 & 2 \\
    \hline
\end{tabular}
\label{tab:app_real_boston_vs}
\end{table}

\section{Some Useful Distributions} \label{app:densities}

Some of the frequently encountered probability distributions in this article are reviewed in this section, to remove ambiguity about the parameterizations of their density functions used herein.

\begin{enumerate}

\item The density function of a hyperbolic distribution $\mathrm{Hyperbolic}(\eta, \rho^2)$ is
\begin{equation}
    f_{\mathrm{Hyperbolic}}(x | \eta, \rho^2) = \dfrac{1}{2 \sqrt{\eta \rho^2} K_1(\eta)} e^{-\left\{ \eta(\eta + x^2/\rho^2) \right\}^{1/2}},
    \quad -\infty < x < \infty,
\label{eq:hyperbolicdensity}
\end{equation}
with shape parameter $\eta > 0$ and scale parameter $\rho^2 > 0$, and $K_1$ as a modified Bessel function of the second kind.

\item The density  function of a Student-$t$ distribution $\mathrm{t}(\eta, \rho^2)$ with location parameter 0, scale parameter $\rho^2 > 0$ and $\eta > 0$ degrees of freedom is
\begin{equation}
    f_{\mathrm{t}}(x | \eta, \rho^2) = \dfrac{\Gamma((\eta+1)/2)}{\Gamma(\eta/2)} \dfrac{\eta^{\eta/2}}{\sqrt{\pi\rho^2}} \left(\eta + \dfrac{x^2}{\rho^2}\right)^{-(\eta+1)/2},
    \quad -\infty < x < \infty.
\label{eq:tdensity}
\end{equation}

\item
The generalized inverse Gaussian (GIG) distribution, denoted by $\mathrm{GIG}(\lambda, a, b)$, has a density
\begin{equation}
    f_{\mathrm{GIG}}(x \mid \lambda, a, b) = \dfrac{(a/b)^{\lambda/2}}{2 K_{\lambda}(\sqrt{ab})} x^{\lambda-1} e^{-(ax + b/x)/2},
    \quad x > 0,
\label{eq:gigdensity}
\end{equation}
where $a, b > 0$, $p \in \mathbb{R}$ and $K_{\lambda}$ is a modified Bessel function of the second kind.

\item
The inverse gamma distribution $\mathrm{InvGamma}(a, b)$ has a density
\begin{equation}
    f_{\mathrm{InvGamma}}(x \mid a, b) = \dfrac{b^a}{\Gamma(a)} x^{-a-1} e^{-b/x},
    \quad x > 0,
\label{eq:invgammadensity}
\end{equation}
where $a > 0$ and $b > 0$ respectively denote the shape and the scale parameters.

\item
The probability mass function of the discrete uniform distribution $\mathrm{DiscUniform}(\mathcal{G})$ on a finite set of real values $\mathcal{G}$ can be expressed as
\begin{equation}
    f_{\mathrm{DiscUniform}}(x) = \dfrac{1}{|\mathcal{G}|},
    \quad x \in \mathcal{G},
\label{eq:discuniformdensity}
\end{equation}
where $|\mathcal{G}|$ represents the number of elements in $\mathcal{G}$.

\end{enumerate}

\end{appendices}

\end{document}